\documentclass[10pt]{article}
\usepackage[utf8]{inputenc}
\usepackage{xspace}
\usepackage{color}
\usepackage{algorithmic}
\usepackage[ruled,vlined,linesnumbered]{algorithm2e}
\usepackage{epsfig}
\usepackage{subfig}
\usepackage{mathrsfs,amsmath,amsthm,mdwlist,bbm}

\usepackage{makecell}
\usepackage{multirow}

\sloppy
\allowdisplaybreaks

\renewcommand{\leq}{\leqslant}
\renewcommand{\geq}{\geqslant}
\renewcommand{\ge}{\geqslant}
\renewcommand{\le}{\leqslant}

	\setlength{\floatsep}{1pt}
	\setlength{\textfloatsep}{1pt}
	\setlength{\intextsep}{1pt}
	\setlength{\abovecaptionskip}{2pt}
	\setlength{\belowcaptionskip}{1pt}
	\setlength{\abovedisplayskip}{1pt}
	\setlength{\belowdisplayskip}{1pt}

\usepackage{enumerate}

\usepackage[shortlabels]{enumitem}

 \newtheorem{example}{\textbf{Example}}
\newtheorem{thm}{Theorem}
\newtheorem{corollary}[thm]{Corollary}
\newtheorem{lemma}{Lemma}
\newtheorem{definition}{Definition}

 \setlist{nolistsep,leftmargin=*}

\usepackage{nicefrac}

\newcommand{\eps}{\ensuremath{\varepsilon}\xspace}
\renewcommand{\epsilon}{\eps}
\let\mydelta\delta
\renewcommand{\delta}{\ensuremath{\mydelta}\xspace}
\let\myalpha\alpha
\renewcommand{\alpha}{\ensuremath{\myalpha}\xspace}
\let\mystar\star
\renewcommand{\star}{\ensuremath{\mystar}}

\newcommand{\SC}{{\sc Set Cover}\xspace}
\newcommand{\TMCV}{{\sc TMCV}\xspace}

\newcommand{\pr}{\ensuremath{\prime}}

\newcommand{\el}{\ensuremath{\ell}\xspace}

\newcommand{\NPH}{\ensuremath{\mathsf{NP}}-hard\xspace}
\newcommand{\NPC}{\ensuremath{\mathsf{NP}}-complete\xspace}

\newcommand{\YES}{{\sc{yes}}\xspace}

\newcommand{\EE}{\ensuremath{\mathcal E}\xspace}

\newcommand{\GG}{\ensuremath{\mathcal G}\xspace}

\newcommand{\OO}{\ensuremath{\mathcal O}\xspace}

\renewcommand{\SS}{\ensuremath{\mathcal S}\xspace}
\newcommand{\TT}{\ensuremath{\mathcal T}\xspace}
\newcommand{\UU}{\ensuremath{\mathcal U}\xspace}

\newcommand{\WW}{\ensuremath{\mathcal W}\xspace}

\newtheorem{observation}{\bf Observation}

\usepackage{cleveref}

\crefname{theorem}{Theorem}{Theorems}
\crefname{observation}{Observation}{Observations}
\crefname{lemma}{Lemma}{Lemmas}
\crefname{corollary}{Corollary}{Corollaries}
\crefname{proposition}{Proposition}{Propositions}
\crefname{definition}{Definition}{Definitions}
\crefname{claim}{Claim}{Claims}
\crefname{table}{Table}{Tables}
\crefname{equation}{Inequality}{Inequalities}
\crefname{reductionrule}{Reduction rule}{Reduction rules}
\crefname{section}{Section}{Sections}

\usepackage{booktabs}
\usepackage{authblk}
\title{Network Robustness via Global Coreness}
\author[1]{Palash Dey\thanks{The authors are in alphabetical order.}\thanks{palash.dey@cse.iitkgp.ac.in}}
\author[2]{Suman Kalyan Maity$^*$\thanks{sumankalyannit@gmail.com}}
\author[2]{Sourav Medya$^*$\thanks{sourav.medya@kellogg.northwestern.edu}}
\author[3]{Arlei Silva$^*$\thanks{arlei@cs.ucsb.edu}}
\affil[1]{Indian Institute of Technology, Kharagpur}
\affil[2]{Kellogg School of Management}
\affil[3]{University of California, Santa Barbara}

\date{September 2019}

\begin{document}

\maketitle
\begin{abstract}
Network robustness is a measure a network's ability to survive adversarial attacks. But not all parts of a network are equal. K-cores, which are dense subgraphs, are known to capture some of the key properties of many real-life networks. Therefore, previous work has attempted to model network robustness via the stability of its $k$-core. However, these approaches account for a single core value and thus fail to encode a global network resilience measure. 
 In this paper, we address this limitation by proposing a novel notion of network resilience that is defined over all cores. In particular, we evaluate the stability of the network under node removals with respect to each node's initial core. Our goal is to compute robustness via a combinatorial problem: find $b$ most critical nodes to delete such that the number of nodes that fall from their initial cores is maximized. One of our contributions is showing that it is NP-hard to achieve any polynomial factor approximation of the given objective. We also present a fine-grained complexity analysis of this problem under the lens of parameterized complexity theory for several natural parameters. Moreover, we show two applications of our notion of robustness: measuring the evolution of species and characterizing networks arising from different domains.
 
\end{abstract}

\section{Introduction}

Networks model many real-world complex systems. An important aspect of these networks is their robustness or resilience. Robustness quantifies a network's capability to resist failures that might affect its functionalities. These network failures often lead to a considerable of economic losses. As an example, a snowy weather in 2008 caused a major power grid failure in China \cite{zhou2016two}.


The study of network resilience via stability of the $k$-core structure \cite{seidman1983network} has been a popular topic in recent literature. Bhawalkar et al. \cite{bhawalkar2015preventing} propose maximizing the initial $k$-core size to prevent network unravelling.
The resilience of $k$-core have also been studied under critical node/edge deletion to increase or maintain users' engagement in social networks \cite{zhang2017finding,zhou2019k,medya2019k} and to prevent failures in technological networks \cite{Laishram2018}. Consider an example of a P2P network where the users who benefit from the network should also share their resources with other users. This follows a $k$-core model and in this case the network owner has to be aware of the critical nodes to maintain the resource sharing process uninterrupted. 



The aforementioned studies suffer from a local notion of network stability as they aim to modify the $k$-core for a given value of $k$. We address this limitation by proposing a novel combinatorial problem over $k$-cores: \textit{find $b$ (budget) critical nodes whose deletion will remove the maximum number of nodes from their initial core}. The number of nodes staying in their core after removal of those critical nodes quantifies the stability of the network. Thus, a network is more (less) robust or resilient if a larger set of nodes are unaffected (affected). 



\begin{figure}[t]
    \centering
    \small
    \subfloat[Random ]{\includegraphics[width=0.22\textwidth]{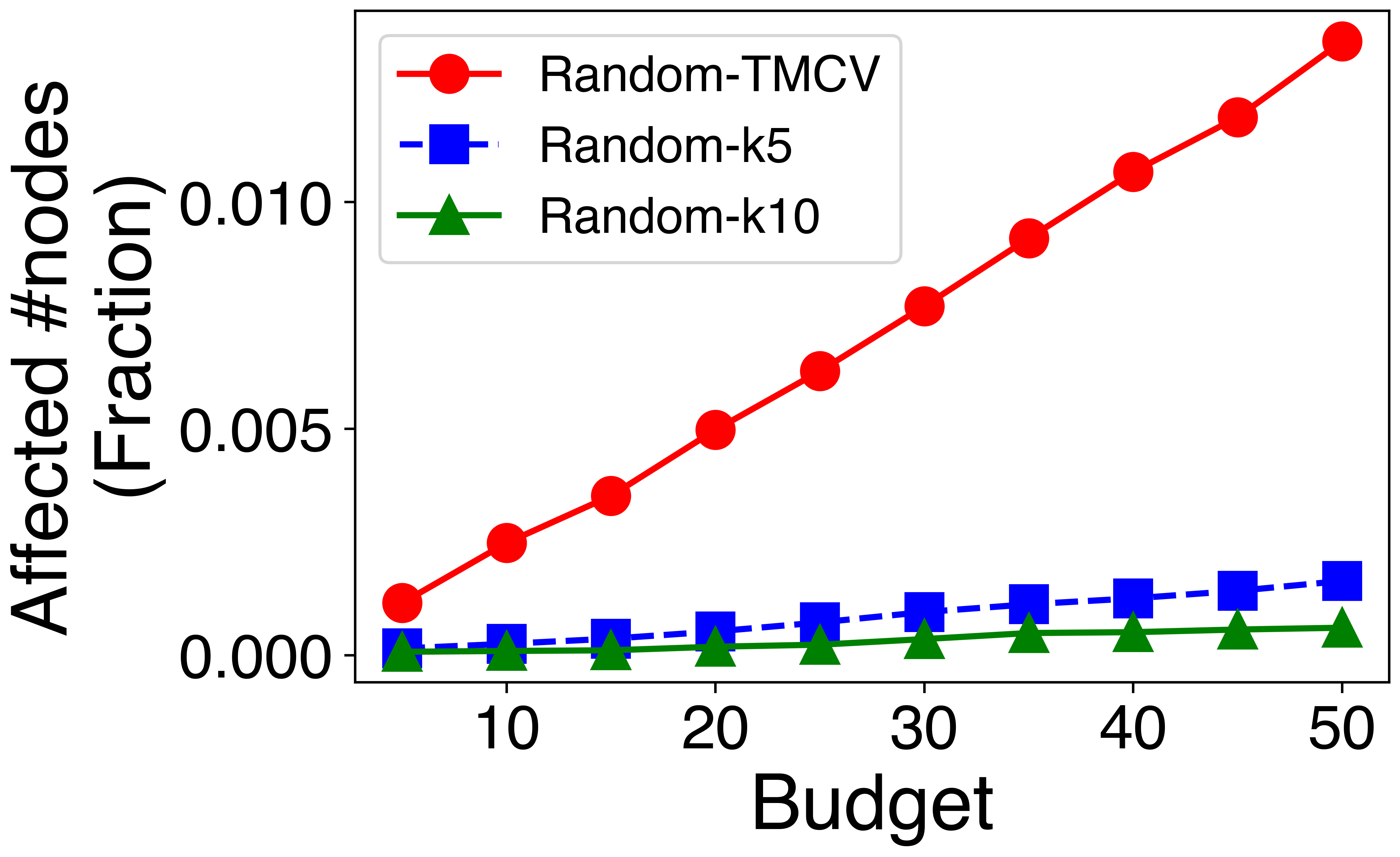}\label{fig:random}}
    \hspace{4mm}
     \subfloat[High Degree ]{\includegraphics[width=0.22\textwidth]{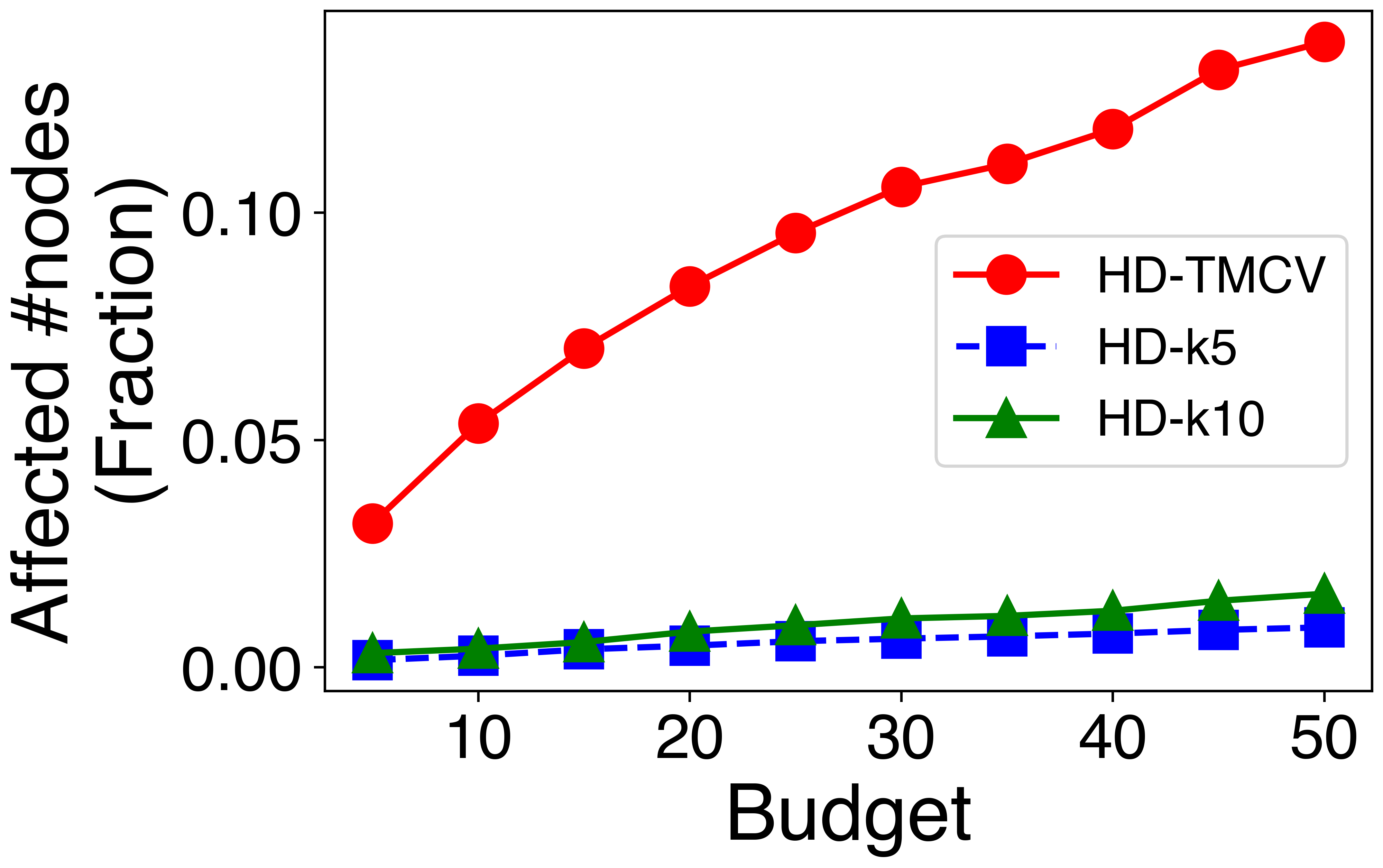}\label{fig:degree}}
    \caption{ \small Fraction ($F_v$) of the entire node set is affected by deleting (a) random and (b) high degree nodes. The red line shows the result by our objective (see TMCV in Def. \ref{def:tmcv}), whereas, the other ones show the effect inside fixed $k$-cores ($k=5$ and $k=10$) in a co-authorship network (CondMat in Sec. \ref{sec:exp}). Our objective has a global and larger impact on the entire network. \label{fig:problems}}
\end{figure}

Figure \ref{fig:problems} shows an example of the global effect of our formulation. We show how the nodes get affected (i.e., fall from their initial core) under node deletion via two strategies when (a) random and (b) high degree nodes are selected. The $y$-axis shows the fraction of the total nodes that get affected. Our formulated objective (red line) captures a global robustness notion and the number of affected nodes are much larger than in the individual cores (denoted by blue and green for $5$-core and $10$-core respectively). 

\subsection{Contributions}

We study a novel combinatorial problem, \textit{Total Minimization of Coreness via Vertex deletion} (\TMCV), which aims to measure network robustness based on the maximum number of nodes that fall from their initial core after $b$ number of nodes are deleted. 
Besides showing strong inapproximability result, we present fine-grained parameterized complexities of the problem for several parameters. Table \ref{tab:tmcv_summary} (in Section \ref{sec:theory}) summarizes the main theoretical results.

Additionally, we propose a few heuristics to solve \TMCV and evaluate their performance on real datasets. These heuristics nicely capture interesting structural properties of networks from different genres (e.g., social, co-authorship). Furthermore, we apply our proposed network robustness measure to understand the evolution of species. Zitnik et al. \shortcite{Zitnik454033} has shown that evolution is related to a network robustness measure based on network connectivity. Intuitively, more genetic changes in a species would result in a more resilient protein-protein interaction network of the same. In Section \ref{sec:exp}, we show significant correlation between our proposed resilience/robustness measure and the evolution dynamics of species. 


Our main contributions are as follows:

\begin{itemize}[itemsep=.2cm]
 \item We propose a novel network robustness problem (TMCV) based on the coreness of nodes under deletion of nodes.
 
 \item We show that it is NP-hard to achieve any polynomial factor approximation for \TMCV (Thm. \ref{thm:inapprox_tmcv}).

 \item We study the parameterized complexity of our problem for several natural parameters. We show that \TMCV is $W[2]$-hard (Thm. \ref{thm:TMCV_param_b}) parameterized by the budget $b$ and para-NP-hard parameterized by the degeneracy (Cor. \ref{cor:tmcv_para_d}) of the graph and the maximum degree (Thm. \ref{thm:tmcv_para_deg}) individually. 
 
 \item We propose several heuristics that capture interesting structural properties of networks from different genres. Furthermore, we show how we can apply our network robustness measure to understand the evolution of species. 
\end{itemize}

\paragraph{Organization of the paper} The paper is organized as follows: Section \ref{sec:related_work} describes the related work. We define our network robustness problem in Section \ref{sec:prob_def}. We show how to apply our network robustness measure to capture interesting structural properties of networks as well as to understand the evolution of species in Section \ref{sec:exp}. Finally, Section \ref{sec:theory} demonstrates all the theoretical results. 

\subsection{Related Work}
\label{sec:related_work}

Understanding robustness of a network via the stability of its $k$-core has recently received a significant amount of attention. The major goal in this line of work is to measure the resilience of the $k$-core of a network under its modifications. Zhang et al. \cite{zhang2017finding} first propose the collapsed $k$-core problem that aims to minimize the $k$-core by deleting $b$ critical vertices. The edge version of this problem has been recently addressed with efficient heuristics \cite{zhu2018k,medya2019k}. Another related paper \cite{Laishram2018} measures the stability of $k$-core under random edge/node deletions. These studies only focus on the $k$-core robustness, i.e., the effect on the nodes inside the $k$-core. 
On the contrary, we propose a novel and generalized version of these problems. Our robustness measure captures the affected nodes in different cores (i.e., any $k$) upon a budget number of node deletions.


Other related but orthogonal literature studies the  maximization of the $k$-core in networks via different mechanisms. One such example is maximization of the $k$-core by making a few nodes outside the $k$-core as anchors to prevent unraveling in social networks \cite{bhawalkar2015preventing,chitnis2013preventing}. The other example involves adding edges with nodes outside of the $k$-core \cite{zhou2019k}. Another related paper \cite{luo2018parameterized} discusses parameterized algorithms for the collapsed k-core problem \cite{zhang2017finding}. In this paper, we discuss parameterized complexity for a different problem along with inapproximabilty results.

\textbf{Network robustness: } 
Previous work has also studied the ability of a network to sustain various types of attacks or failures and termed it as network robustness. An extensive survey of different robustness measures for undirected and unweighted networks is conducted by Ellens et al. \cite{ellens2013graph}. The measures discussed in this survey are mainly based on network connectivity and shortest path distances along with some others based on Laplacian eigenvalues. Another survey \cite{liu2017comparative} provides nine widely used robustness measures and studies their sensitivity. The network robustness models vary depending on the application. In a recent work, Lordan et al. \cite{lordan2019exact} have identified the set of removed nodes that maximizes the size of the largest connected component of a network in an optimal manner. Recently, Zitnik et al. \cite{Zitnik454033} show that a connectivity-based robustness of protein-protein interaction networks is a good predictor of the extent of evolution of a species. Here, we consider a different robustness measure.  

Our proposed problem is also related to network design problems. These problems aim to optimize network properties or processes under network modifications. Examples include diameter \cite{demaine2010}, node centrality \cite{crescenzi2015,medya2018group}  shortest path~\cite{meyerson2009,dilkina2011,medya2018noticeable,medya2018making}, and influence spread \cite{Tong2012GML,Khalil2014,kimura2008minimizing,medya2020approximate} improvement. However, our objective is different from the ones considered by network design studies. 


\section{Problem Definition}
\label{sec:prob_def}

Let $G(V,E)$  be an undirected and unweighted graph with sets of vertices $V$ ($|V|=n$) and edges $E$ ($|E|=m$). We denote the degree of vertex $u$ in $G$ by $d(u,G)$. An induced subgraph, $H=(V_H,E_H)$ of $G$ is the following: if $u,v \in V_H$ and $(u,v)\in E$ then $(u,v)\in E_H$. The $k$-core \cite{seidman1983network} of a network is defined as follows.

\begin{definition} \textbf{$k$-Core:} The $k$-core of a graph $G$, denoted by $S_k(G)=(V_k(G),E_k(G))$, is defined as a maximal induced subgraph where each vertex has degree at least $k$.
\end{definition}

\begin{definition} \textbf{Coreness:} The coreness of a node $v$ in graph $G$, denoted by $C(v,G)$, is defined as the maximum $k$ where $v\in S_k(G)$ and $v\notin S_l(G)$ for any $l>k$.
\end{definition}

\begin{definition} \textbf{Degeneracy:} The degeneracy  of a graph $G$, denoted by $D(G)$, is defined as the largest $k$ where $S_k(G)$ is non-empty.
\end{definition}

\begin{example}
Consider the initial graph in Figure \ref{fig:init_graph} as an example. The degeneracy of the graph is $3$.
 \end{example}
 
 \begin{figure}[t]
 \vspace{-3mm}
    \centering
    \small
    \subfloat[Initial ]{\includegraphics[width=0.12\textwidth]{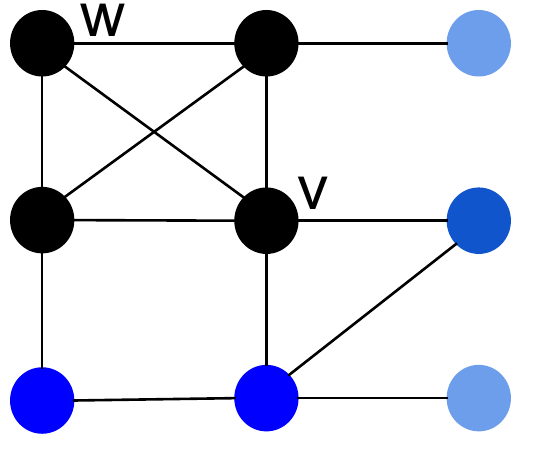}\label{fig:init_graph}}
    \subfloat[Only 3-core]{\includegraphics[width=0.12\textwidth]{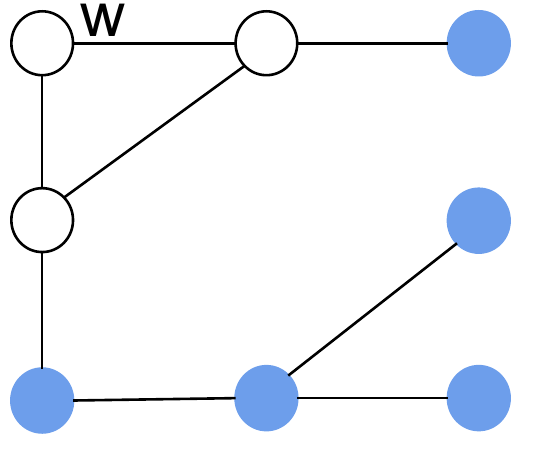}\label{fig:init_ex2}}
    \subfloat[Delete $v$ ]{\includegraphics[width=0.12\textwidth]{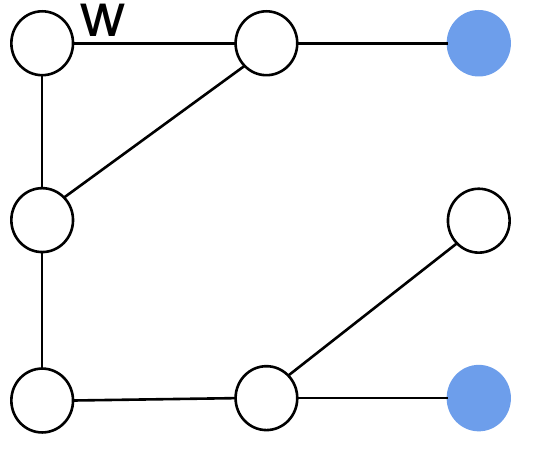}\label{fig:init_ex3}}
    \subfloat[Delete $w$ ]{\includegraphics[width=0.12\textwidth]{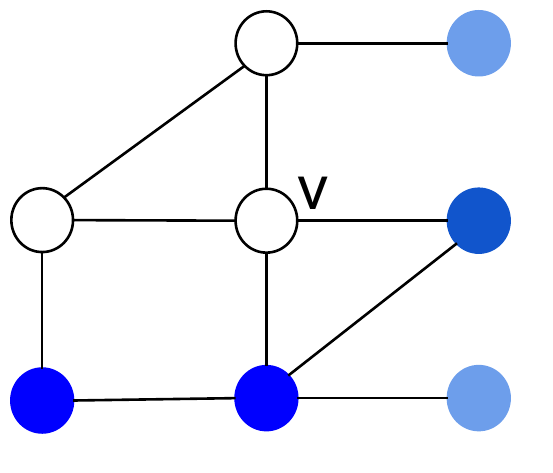}\label{fig:init_ex4}}
     
    \caption{\small(a) Initial graph: four nodes are in 3-core, three in 2-core and two nodes are in 1-core. (b) Considering just 3-core, deleting $a$ removes all other three nodes from 3-core. (c) In our problem \TMCV (Def. \ref{def:tmcv}), all the six empty nodes got affected after deleting $v$. (d)  In \TMCV, all three empty nodes got affected on deleting $w$.  \label{fig:tmcv_new_ex}}
\end{figure}

 We denote the modified graph $G$ after deleting a set $B$ consisting of $b$ vertices (nodes) as  $G\setminus B$. Deleting a vertex reduces the degree of its neighbours and possibly their coreness. This reduction in coreness might propagate to other vertices. Let us define an affected node as follows: a node $v$ is affected if $C(v,G)>C(v,G\setminus B)$. The example in Figure \ref{fig:init_ex3} shows that deleting a node (e.g. node $v$) can affect the neighbours and propagate to other non-neighbor nodes. Next we define the coreness minimization problem.
 
 \begin{definition} \label{def:tmcv} \textbf{Total Minimization of Coreness via Vertex deletion (\TMCV):} Given a graph $G\!=\!(V,E)$, candidate vertices $\Gamma\!\subseteq\!V$ and budget $b$, find the set $B\!\subset\!\Gamma$ of nodes to be removed such that $|B| \leq b$ and $f(B)\!=\!|\{v\!\in\!V\!\setminus\!B\!:\!C(v,G)\!>\!C(v,G\!\setminus\!B)\}|$ is maximized.
 \end{definition}

 Note that the objective minimizes the number of unaffected nodes. Intuitively, a network is more robust if its value of $f$ is small.
 \begin{example}
 Figure \ref{fig:init_graph} shows an example of the initial graph. The TMCV objective is explained in Figures \ref{fig:init_ex3} and \ref{fig:init_ex4}. In Figure \ref{fig:init_ex3}, when $v$ is deleted, all the remaining three nodes in the 3-core fall into 2-core and all the three nodes in the 2-core move to 1-core. Thus, five nodes are affected, i.e., $f(\{v\})=5$. In Figure \ref{fig:init_ex4}, by deleting the node $w$, only the nodes that were in the 3-core are affected, i.e., $f(\{w\})=3$. The empty nodes in Figures \ref{fig:init_ex2}, \ref{fig:init_ex3} and \ref{fig:init_ex4}  are the affected ones---i.e., with reduced coreness.
 \end{example}
 
\section{Theoretical Results} \label{sec:theory}

The evaluation of our robustness measure relied on simple heuristics. However the question of finding an optimal algorithm still needs to be addressed.  In this section, we evaluate the hardness of the TMCV problem. The theoretical results show that there is no polynomial time algorithm even to achieve a constant factor approximation for the TMCV problem. From a parameterized perspective, we show that there is no fixed parameter tractable (FPT) algorithm for a few natural parameters such as the degeneracy, budget and the maximum degree of a node. The TMCV problem is either para-NP-hard or $W[2]$-hard for these parameters. However, for the size of the candidate set, there exists a FPT algorithm. We summarize our main results in Table \ref{tab:tmcv_summary}. 
 \subsection{Parameterized Complexity Results}

\begin{table}[t]
	\centering
	\renewcommand{\arraystretch}{1.2}
	\begin{tabular}{|c|c|}\hline
		 \textbf{Cond./Param.} & \textbf{Results}\\\hline
		
		  $b$& $W[2]$-hard (Theorem \ref{thm:TMCV_param_b}) \\\hline
		$D$ & para-NP-hard (Corollary \ref{cor:tmcv_para_d})\\\hline
		$\Delta$ &  para-NP-hard (Theorem \ref{thm:tmcv_para_deg}) \\\hline
		
		$|\Gamma|$ &  FPT (Observation \ref{obs:tmcv_param_gamma}) \\\hline
		
 $D(G)=1$ & Poly (Theorem \ref{lemma:np_hard_d_1_2})\\\hline
	 $D(G)\geq 3$ & NP-hard to approximate (Thm. \ref{thm:inapprox_tmcv})\\\hline
	\end{tabular}
	\caption{Summary of our hardness and parameterized complexity results for the TCMV problem. We denote the budget by $b$, the degeneracy (maximum coreness over all vertices) of the graph by $D(G)$ or $D$, the maximum degree of any vertex by $\Delta$, and the candidate set by $\Gamma$.}\label{tab:tmcv_summary}
\end{table}

Our first result shows that the TMCV problem is $W[2]$-hard parameterized by $b$. The proof involves an fpt-reduction from the well-known $W[2]$-hard \SC problem parameterized by the size of set cover~\cite{bonnet2016parameterized}. The \SC problem is defined as follows.

\begin{definition}[\SC]
	Given an universe \UU, a collection \SS of subsets of \UU, and a positive integer $r$, compute if there exists a subcollection $\WW\subseteq\SS$ such that (i) $|\WW|\le r$ and (ii) $\cup_{A\in\WW} A = \UU$.
\end{definition}
 
\begin{figure}[h]

    \centering
    {\includegraphics[width=0.28\textwidth]{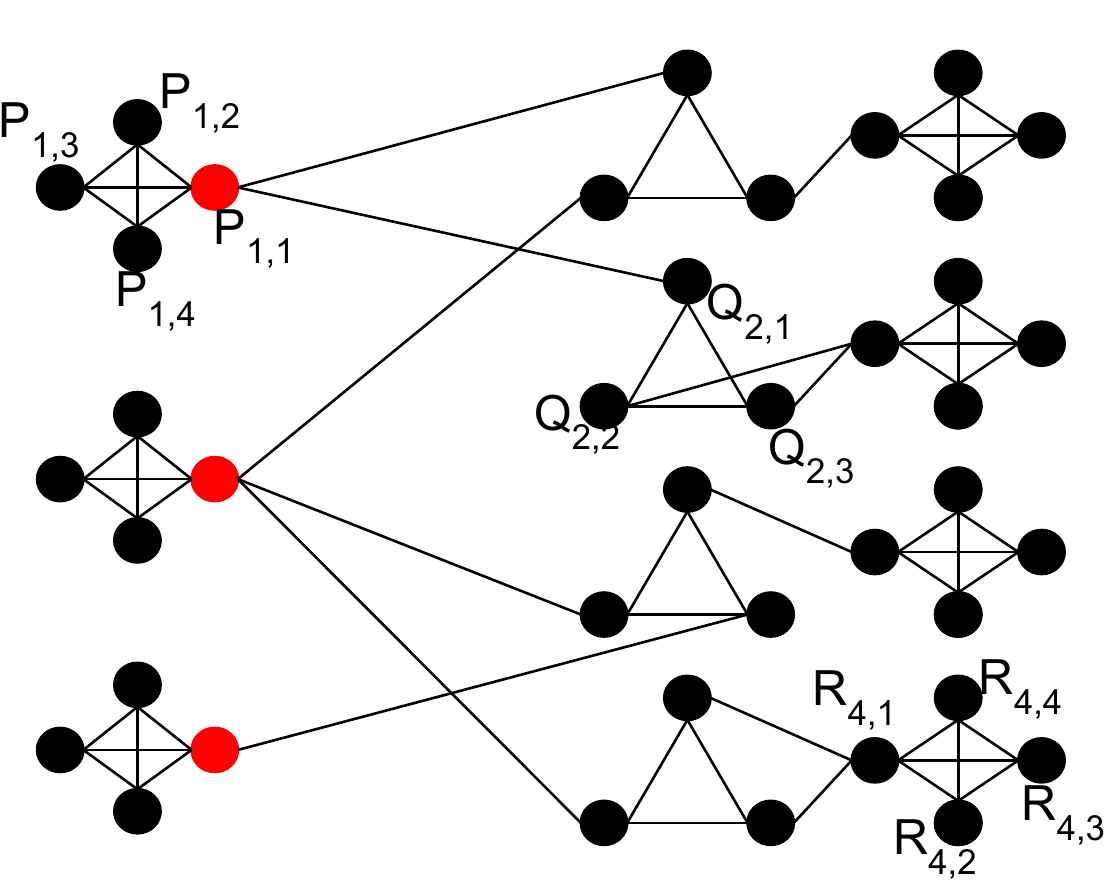}}
    \caption{An example of the construction for parameterized hardness of TMCV: reduction from Set Cover where  $U=\{u_1,u_2,u_3,u_4\}, S=\{S_1,S_2,S_3\}, S_1=\{u_1,u_2\},S_2=\{u_1,u_3,u_4\},S_3=\{u_3\}$. \label{fig:TCMV_param_b}}
  
\end{figure}

\begin{thm} \label{thm:TMCV_param_b}
The TMCV problem is $W[2]$-hard parameterized by $b$ for $k\geq 3$.
\end{thm}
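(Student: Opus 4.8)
The plan is to give an \FPT-reduction from \SC parameterized by the solution size $r$ (which is \WTH~\cite{bonnet2016parameterized}) to \TMCV parameterized by the budget $b$, designed so that the output graph has degeneracy exactly $k$; this yields the claim for every fixed $k\ge 3$. Fix such a $k$ and let $(\UU,\SS,r)$ be a \SC instance with $\UU=\{u_1,\dots,u_n\}$ and $\SS=\{S_1,\dots,S_m\}$; I may assume $2\le r<m$ and that every element lies in some set. I would build $G$ from three parts: (i) an \emph{anchor clique} $H\cong K_{k+1}$, which is already a $k$-core, so each of its vertices ends up with coreness exactly $k$; (ii) for every $S_i$ a \emph{set vertex} $s_i$ joined to $k$ fixed vertices of $H$, so that $s_i$ lies in the $k$-core as long as it is not deleted --- I set $\Gamma=\{s_1,\dots,s_m\}$ and $b=r$; (iii) for every element $u_j$, writing $d_j:=|\{i:u_j\in S_i\}|$, an \emph{AND-tree} $\TT_j$ rooted at a vertex $x_j$: if $d_j\le k$, make $x_j$ adjacent to the $d_j$ set vertices containing $u_j$ together with $k-d_j$ vertices of $H$; if $d_j>k$, grow a rooted tree in which the root has exactly $k$ children and every other internal node has one parent and exactly $k-1$ children (so every internal node has degree exactly $k$), whose leaves are exactly the $d_j$ set vertices $\{s_i:u_j\in S_i\}$ (each used once), padding short branches with vertices of $H$ as dummy leaves. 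Since $k-1\ge2$, each such tree has only $O(k\,d_j)$ internal nodes, so $G$ has polynomial size; I let $c_j$ be the number of non-leaf vertices of $\TT_j$ ($c_j=1$ when $d_j\le k$) and set the target value to $T=\sum_{j=1}^{n} c_j$. Since $b=r$, this is an \FPT-reduction once correctness is established.

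Correctness would follow from three claims. (a) Every vertex of $G$ has degree at least $k$, so $G$ is its own $k$-core; and the only vertices of degree exceeding $k$ are those of $H$ and the $s_i$, which cannot jointly support a $(k+1)$-core: in the subgraph they induce each $s_i$ has degree exactly $k$, so $k$-core peeling removes all the $s_i$ and then all of $H$. Hence $D(G)=k$ and \emph{every} vertex of $G$ has coreness exactly $k$. (b) For any $B\subseteq\Gamma$ with $|B|\le r<m$, all of $H$, every surviving $s_i$, and every tree leaf remain in the $k$-core, since each $H$-vertex keeps its $k$ clique neighbours and each surviving $s_i$ keeps its $k$ anchors. (c) For each element $u_j$: if $B$ contains no leaf of $\TT_j$ then $\TT_j$ is untouched and $x_j$ keeps coreness $k$; but if $B$ deletes even one leaf $s_i$ of $\TT_j$, then the parent of $s_i$ drops to degree $k-1$ and the resulting $k$-core unravelling propagates \emph{upward} to $x_j$ and \emph{downward} through every other branch, destroying all $c_j$ non-leaf vertices of $\TT_j$ while leaving the leaves in place. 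Combining (a)--(c) yields $f(B)=\sum_{j\,:\,u_j\text{ covered by }\{S_i:s_i\in B\}} c_j$, which is at most $T$, with equality exactly when $\{S_i:s_i\in B\}$ is a cover of $\UU$.

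Granting the claims, $(\UU,\SS,r)$ is a yes-instance of \SC iff there is $B\subseteq\Gamma$ with $|B|\le b=r$ and $f(B)\ge T$, i.e.\ iff the constructed \TMCV instance is a yes-instance; since that instance has degeneracy exactly $k$, we get \WTH of \TMCV parameterized by $b$ for every fixed $k\ge3$ (and, along the way, the \NPH needed for the \pNPH statement for the degeneracy). I expect claim (c) to be the crux: one has to argue that deleting a \emph{single} set vertex annihilates the \emph{whole} AND-tree --- so that the element gadget faithfully realizes an OR over the sets containing $u_j$ --- while still leaving the shared anchor clique and every other set vertex inside the $k$-core. This is also precisely where the hypothesis $k\ge3$ is used: for $k\le2$ the internal tree nodes cannot branch, $\TT_j$ collapses to a path, and an element of frequency larger than $k$ can no longer be encoded, consistent with the $D(G)=1$ case being polynomial-time solvable.
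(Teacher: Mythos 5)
Your proposal is correct and rests on the same template as the paper's own proof: an \FPT-reduction from \SC parameterized by solution size, with $b=r$, candidate set equal to the set vertices, and a per-element gadget all of whose vertices have degree exactly the core value, so that deleting any one set vertex covering the element unravels the whole gadget while the anchor structure and the surviving set vertices keep their coreness. Where you genuinely differ is in the gadgets and the accounting. The paper hard-wires $k=3$: each set is a $K_4$ with $P_{i,1}$ deletable, and each element $u_j$ is a length-$m$ cycle whose $i$-th vertex gets its third edge either to $P_{i,1}$ (if $u_j\in S_i$) or to a private $K_4$ (if not), so the cycle falls from the $3$-core once one incident $P_{i,1}$ is deleted. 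You instead hang degree-$k$ ``AND-trees'' off a shared anchor clique $K_{k+1}$, which buys two things: hardness for every degeneracy value $k\ge 3$ (matching the ``$k\ge3$'' in the statement literally, where the paper's construction only realizes $D=3$), and cleaner bookkeeping --- your $f(B)=\sum_{j\ \mathrm{covered}} c_j$ with threshold $T=\sum_j c_j$ and $c_j\ge 1$ makes the ``iff cover'' equivalence immediate, whereas the paper's count of $4b+mn$ versus $3b+mn$ affected vertices is handled somewhat loosely. Your claim (c) is indeed the crux, and the upward/downward unravelling argument is sound; it can be made fully rigorous by noting that the internal tree vertices have degree exactly $k$ and induce a connected subtree, so once one leaf is deleted none of them can survive the $k$-core peeling. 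The only detail left to nail down is that the dummy leaves attached to a single internal node must be chosen as distinct vertices of the anchor clique (possible, since an internal node has at most $k-1$ leaf slots and the clique has $k+1$ vertices), so that no parallel edges arise and internal degrees remain exactly $k$.
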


\begin{proof}
Let $(\UU=\{ u_{1},u_{2},...,u_{n} \},\SS=\{S_{1},S_{2},...,S_{m}\}, r)$ be an instance of the \SC problem. We define a corresponding TMCV problem instance via constructing a graph $\GG$ as follows.

 For each $S_i \in \SS$ we create a clique of four vertices ($P_{i,1},\cdots,P_{i,4}$). For each $u_j \in \UU$, we create a cycle of $m$ vertices $Q_{j,1}, Q_{j,2},\cdots, Q_{j,m}$ with edges $(Q_{j,1},Q_{j,2}), \cdots,(Q_{j,m-1},Q_{j,m}), (Q_{j,m},Q_{j,1})$. 
We also create a clique of four vertices ($R_{j,1},\cdots,R_{j,4}$) for each $u_j \in \UU$. 
 Furthermore, edge $(P_{i,1},Q_{j,i})$ will be added to $\EE[\GG]$ if $u_j\in S_i$. Additionally, if $u_j\notin S_i$, edge $(Q_{j,i},R_{j,1})$ will be added to $\EE$. The candidate set, $\Gamma_v = \{ P_{i,1}| \forall{i=1,2,...,m}\}$.  Fig. \ref{fig:TCMV_param_b} illustrates our construction for sets $S_1=\{u_1,u_2\},S_2=\{u_1,u_3,u_4\}$ and $S_3=\{u_3\}$. We formally describe our TMCV instance as follows.
 \begin{align*}
     V[\GG] &= \{P_{i,t} : S_i\in \SS, t\in [4]\} \cup V_1, \text{where} \\
      V_1 &= \{Q_{j,i}: u_j\in \UU, S_i\in \SS\} \cup \{R_{j,t} : u_j\in \UU,  t\in[4]\} \\
      \EE[\GG] &= E_1\cup E_2\cup E_3 \cup E_4 \cup E_5, \text{where} \\
      E_1 &= \{(P_{i,s},P_{i,t}) :  i\in [m]; s,t \in [4], s\neq t\} \\
       E_2 &= \{(R_{j,s},R_{j,t}) : j\in [n]; s,t \in [4], s\neq t\}   \\
       E_3 &= \{(Q_{j,i},Q_{j,i+1}) : i\in [m-1], j\in [n]\} \\
      E_4 &=  \{(P_{i,1}, Q_{j,i})| u_j\in S_i, i \in [m], j\in [n]\} \\
      E_5 &= \{(Q_{j,i}, R_{j,1})|u_j\notin S_i, i \in [m], j\in [n]\} \\
      \Gamma &= \{P_{i,1}| i\in [m]\} \\
      b &= r
 \end{align*}
  
For any integer $z$, we denote the set $\{1,2,...,z\}$ by $[z]$. 
We now claim that the set cover instance is a \YES instance if and only if there exists a subset $B\subseteq\Gamma$ with $|B|\le b$ and $|\{v\in V\setminus B: C(v,\GG) > C(v,\GG\setminus B)\}|\ge 4b+mn$.

In one direction, let us assume that the \SC instance is a \YES instance. By renaming, let $S^\pr=\{S_1, \ldots, S_r\}$ form a set cover of the instance. We delete the nodes in the set $V^\pr=\{P_{i,1}| i\in[r]\}$ in the graph \GG. We claim that, by deleting the nodes in the set $V^\pr$, every node in $\{P_{i,t}| S_i \in S^\pr,t \in [4] \} \cup \{Q_{j,i}| j \in [n],i\in [m] \}$, i.e. $4b+mn$ nodes will go to the $2$-core.
We first observe that deletion of $P_{i,1}$ will move the other three nodes ($P_{i,2},P_{i,3},P_{i,4}$) to the $2$-core. 
Also, for any $j\in [n]$, if any connection $(Q_{j,i}, P_{i,1})$ gets deleted because of deletion of $P_{i,1}$, all the $m$ nodes in the set $\{Q_{j,t}| t\in [m]\}$ will go to $2$-core. Since $S^\pr$ forms a set cover for \UU, at least one connection $(Q_{j,l}, P_{l,1})$ where $j \in [n]$ and some $P_{l,1}\in V^\pr$ will get removed. Thus a total of $mn$ nodes will go to the $2$-core. Hence, the TMCV instance is a \YES instance.

For the other direction, we assume that there exists a subset $V^\pr=\{P_{i,1}|i\in [b]\}$ (by renaming) of size $b$ of the set $\Gamma$ such that deletion of $V^\pr$ would make at least $4b+nm$ nodes fall from the $3$-core. We claim that $S'=\{S_i: i \in [b]\}$ ($b=r$) forms a set cover for \UU. Suppose it is not, then at most $n-1$ connections among $(Q_{j,l}, P_{l,1})$ for $j \in [n]$ and some $P_{l,1}\in V^\pr$ will get deleted. Thus, at most $(n-1)m$ nodes will go into the $2$-core making it a total of $b+(n-1)m$ nodes falling from $3$-core. Hence, this is a contradiction and $S'$ is a set cover. 
\end{proof}

From the proof of Theorem \ref{thm:TMCV_param_b}, we obtain the following corollary. This follows from the observation that the \SC problem remains \NPC even if the size of each subset is $3$ and every element of the universe belongs to exactly $2$ subsets~\cite{DBLP:journals/arscom/FrickeHJ98}.

\begin{corollary}\label{cor:tmcv_para_d}
 The TMCV problem is para-NP-hard parameterized by $D$.
\end{corollary}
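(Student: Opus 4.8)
The plan is to recycle, essentially unchanged, the polynomial-time reduction from \SC to \TMCV built in the proof of Theorem~\ref{thm:TMCV_param_b}, but to feed it only instances of the special case of \SC in which every set $S_i$ has size exactly $3$ and every element $u_j$ belongs to exactly $2$ sets. By \cite{DBLP:journals/arscom/FrickeHJ98} this restricted problem is still \NPC (it is exactly \VC on cubic graphs, with edges as elements and vertices as sets). Since the reduction of Theorem~\ref{thm:TMCV_param_b} is a polynomial-time many-one reduction from general \SC, it in particular transforms instances of this restricted, still \NPC, variant into equivalent \TMCV instances. It therefore suffices to show that on such inputs the output graph $\GG$ has degeneracy bounded by an absolute constant.

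To bound $D(\GG)$ I would exhibit a good elimination order. The key structural fact is that every cycle vertex $Q_{j,i}$ has degree exactly $3$ in $\GG$: two edges along its cycle $Q_{j,1},\dots,Q_{j,m}$ and exactly one further edge, to $P_{i,1}$ if $u_j\in S_i$ and to $R_{j,1}$ otherwise. Hence we may repeatedly delete $Q$-vertices, each of degree at most $3$ at the moment it is removed, since a vertex's degree only decreases as others leave. Once all $Q$-vertices are gone, every surviving edge lies inside one of the $4$-cliques $\{P_{i,1},\dots,P_{i,4}\}$ or $\{R_{j,1},\dots,R_{j,4}\}$, because every edge of $E_4$ and of $E_5$ has a $Q$-vertex as one of its endpoints; what is left is a disjoint union of copies of $K_4$, which has degeneracy $3$. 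Thus $D(\GG)\le 3$, and since $\GG$ contains the clique $\{P_{1,1},\dots,P_{1,4}\}$ (a non-empty $3$-core) we in fact have $D(\GG)=3$, a fixed constant.

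The subtle point — and the main obstacle — is that the reduction does create vertices of unbounded degree: each $R_{j,1}$ has degree $3+|\{i:u_j\notin S_i\}| = m+1$, so one cannot simply read the degeneracy off the maximum degree. The resolution is exactly the observation above: all the ``extra'' edges incident to $R_{j,1}$ go to degree-$3$ cycle vertices, so peeling the cycle vertices first drops $R_{j,1}$ to degree $3$. Put contrapositively, any induced subgraph $H$ of $\GG$ with minimum degree at least $4$ could contain no $Q$-vertex, but then $H$ uses only $P$- and $R$-vertices and hence is contained in a disjoint union of $K_4$'s, which always has a vertex of degree at most $3$ — a contradiction. With $D(\GG)=3$ established, the reduction shows that \TMCV is \NPH already on instances whose degeneracy equals the constant $3$; that is, \TMCV is \pNPH parameterized by $D$, and in particular has no \XP (let alone \FPT) algorithm in $D$ unless \Pshort$=$\NPshort.
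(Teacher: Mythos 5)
Your proposal is correct and takes essentially the same route as the paper: run the Theorem~\ref{thm:TMCV_param_b} construction on the \NPC restriction of \SC with sets of size $3$ and elements in exactly $2$ sets, and note that the resulting graph has degeneracy $3$. Your explicit peeling argument for $D(\GG)=3$ merely spells out the detail the paper asserts without proof (and in fact shows the degeneracy bound already holds for the construction applied to arbitrary \SC instances).
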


\begin{proof}
From the above construction, if we begin with an instance of \SC where the size of each subset is $3$ and every element of the universe belongs to exactly $2$ subsets, then we observe that the TMCV problem is also \NPH when $D=3$.
\end{proof}

We next consider maximum degree of the graph as parameter. By reducing from the Exact Cover problem~\cite{DBLP:books/daglib/0023376}, we show next the TMCV problem is para-NP-hard parameterized by the maximum degree of the input graph. The Exact Cover problem is the \SC problem where every set contains exactly $3$ elements from the universe. However, we use a special case of the Exact Cover problem where the elements are exactly in two subsets. This special case is known to be \NPC~\cite{DBLP:books/daglib/0023376}.

\begin{thm}
\label{thm:tmcv_para_deg}
The TMCV problem is para-NP-hard parameterized by the maximum degree ($\Delta$) in the graph.
\end{thm}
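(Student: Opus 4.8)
The plan is to reduce from the NP-complete \emph{Exact Cover} variant in which every set has exactly $3$ elements and every element of the universe lies in exactly $2$ sets; call such an instance $(\UU,\SS)$ with $|\UU|=n$, $|\SS|=m$. Note that since each element is in exactly $2$ sets and each set has exactly $3$ elements, we have $3m = 2n$, and an exact cover must use exactly $n/3$ sets. The key point driving the whole argument is that in this restricted instance every vertex of the natural incidence-type gadget has bounded degree: set-gadget vertices will see $3$ elements, element-gadget vertices will see $2$ sets, and we pad everything else with small cliques/cycles of constant size, exactly as in the construction for Theorem \ref{thm:TMCV_param_b}. So $\Delta$ will be an absolute constant (independent of $n,m$), which is precisely what ``para-NP-hard parameterized by $\Delta$'' requires.

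The construction I would use mirrors the one in Theorem \ref{thm:TMCV_param_b} but with the roles arranged so degrees stay bounded. For each $S_i\in\SS$ build a constant-size gadget (a small clique) with one ``port'' vertex $P_i$ that will be the candidate vertex; since $|S_i|=3$, $P_i$ gets only $3$ extra incidences to element gadgets. For each $u_j\in\UU$, since $u_j$ lies in exactly $2$ sets, replace the length-$m$ cycle of Theorem \ref{thm:TMCV_param_b} by a constant-size gadget — it suffices to have a short path/cycle of length $2$ (the two occurrences of $u_j$) attached to a small clique $R_j$ that ``catches'' unused occurrences, so every vertex here has degree $O(1)$. Set $b=n/3$ and $\Gamma=\{P_i : S_i\in\SS\}$. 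One then picks a target $t$ on the number of affected vertices: deleting $P_i$ collapses $S_i$'s clique (a constant number of vertices fall) and, for each $u_j\in S_i$, severs one support of $u_j$'s gadget; the $u_j$ gadget loses coreness (and drags down its $R_j$ clique) \emph{iff} every occurrence of $u_j$ is severed, i.e.\ iff $u_j$ is covered — and in the exact-cover regime ``covered at least once'' coincides with the budget being tight. Choosing $t$ equal to the total count obtained when a set cover of size $n/3$ is deleted gives: YES-instance of Exact Cover $\Rightarrow$ YES-instance of TMCV, and conversely, by the same double-counting argument as in Theorem \ref{thm:TMCV_param_b} (if $B$ is not a cover, at least one element gadget survives, so strictly fewer than $t$ vertices are affected).

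I would then verify the two directions exactly as in the proof of Theorem \ref{thm:TMCV_param_b}: the forward direction deletes the $P_i$ for sets in the exact cover and counts the collapsing clique/element gadgets; the reverse direction argues that missing any element of $\UU$ costs at least one full element-gadget's worth of affected vertices, contradicting the target. Finally I would remark that, by construction, every vertex has degree at most some fixed constant $\Delta_0$ (read off from the clique sizes plus $2$ or $3$), so TMCV is NP-hard already for $\Delta\le\Delta_0$, which means it is para-NP-hard parameterized by $\Delta$.

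The main obstacle I anticipate is \emph{getting the coreness bookkeeping to behave with only constant-size element gadgets}: in Theorem \ref{thm:TMCV_param_b} the length-$m$ cycle guaranteed that a single severed edge drops the whole cycle out of the $3$-core, and the clique sizes were tuned so nothing else accidentally falls. With constant-size gadgets I must re-tune the clique sizes and the ``catcher'' cliques $R_j$ so that (i) nothing falls unless an element is actually covered, (ii) when an element is covered the entire intended block (its gadget plus $R_j$) falls, and (iii) deleting a $P_i$ never has unintended propagation into a neighbouring element gadget that is not fully severed. Pinning down these constants — essentially choosing the core value $k$ and the gadget clique sizes so the $k$-core peels in exactly the intended way — is the delicate part; everything else is a direct adaptation of the earlier reduction.
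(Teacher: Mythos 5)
Your plan is essentially the paper's own proof: the paper also reduces from the Exact Cover restriction (every set of size $3$, every element in exactly $2$ sets), builds a $4$-clique per set with a port vertex $P_{i,1}$ as the only candidate, replaces the length-$m$ cycle of Theorem~\ref{thm:TMCV_param_b} by a two-vertex gadget $Q_{j,1},Q_{j,2}$ per element attached to a $4$-clique $R_j$, sets $b=r$, and runs the same counting argument; the maximum degree comes out as the constant $\Delta=6$, giving para-NP-hardness.

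One point in your write-up needs correcting, because as stated it would break the forward direction: you claim the element gadget falls \emph{iff every} occurrence of $u_j$ is severed ``i.e.\ iff $u_j$ is covered.'' Those two conditions are not the same, and an exact cover severs exactly one occurrence of each element, so a gadget that only collapses when both occurrences are severed would leave all element gadgets intact on a YES instance. The correct (and intended) semantics, which the paper's gadget realizes, is ``falls iff \emph{at least one} occurrence is severed, i.e.\ iff $u_j$ is covered'': each $Q_{j,t}$ has degree exactly $3$ (its partner occurrence, $R_{j,1}$, and its port), so deleting either port drops that $Q_{j,t}$ below degree $3$ and the peeling then removes its partner as well, while the $4$-clique $R_j$ merely keeps the $Q$ vertices in the initial $3$-core (it does not itself fall, so it should not be counted in the target). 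This is exactly the ``tuning of constants'' you defer at the end; note also that the exactness of Exact Cover is used only to bound the degrees (each element in exactly two sets), while the correctness argument is the plain set-cover argument of Theorem~\ref{thm:TMCV_param_b}. With that equivalence fixed and the target set to count only the set-clique and $Q$ vertices, your argument coincides with the paper's.
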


\begin{proof}
To prove our claim, we show a parameterized reduction from the special case of the Exact Cover problem which is known to be NP-hard. The problem is a \SC problem where each subset has exactly 3 elements and each element belongs to exactly two subsets. Let $(\UU=\{ u_{1},u_{2},...,u_{n} \},\SS=\{S_{1},S_{2},...,S_{m}\}, r)$ be an instance of the mentioned problem. We define a corresponding TMCV problem instance via constructing a graph $\GG$ as follows.

 We follow a similar reduction as in Theorem \ref{thm:TMCV_param_b}. For each $S_i \in \SS$ we create a clique of four vertices ($P_{i,1},\cdots,P_{i,4}$) for each $S_i \in \SS$. For each $u_j \in \UU$, we create two nodes $Q_{j,1}$ and $Q_{j,2}$ (we know each element belongs to exactly two subsets, one node is corresponding to the first subset and the second one is for the second in an arbitrary order) and an edge $(Q_{j,1},Q_{j,2})$ between them. We also create a clique of four vertices ($R_{j,1},\cdots,R_{j,4}$) for each $u_j \in \UU$. 
 Furthermore, edge $(P_{i,1},Q_{j,i})$ will be added to $\EE[\GG]$ if $u_j\in S_i$. Additionally, two edges $(Q_{j,1},R_{j,1})$ and $(Q_{j,2},R_{j,1})$ will be added to $\EE$. Clearly the reduction takes polynomial time. The candidate set, $\Gamma = \{ P_{i,1}| \forall{i=1,2,...,m}\}$. Note that the maximum degree in the graph is constant, i.e. $\Delta = 6$.
  
Initially in $\GG$, all vertices are in the $3$-core. We claim that a set $S'\subset S$, with $|S'|\leq r$, is a cover iff $f(B)=4b+2n$ where  $B\!=\!\{P_{i,1}| S_i \in S' \}$. 

Let us assume that the Exact Cover instance is a \YES instance and, by renaming, the collection $S^\pr=\{S_1, \ldots, S_r\}$ forms a valid set cover of the instance. We delete the nodes in the set $V^\pr=\{P_{i,1}| i\in[r]\}$ in the graph \GG. We claim that by deleting the nodes in the set $V^\pr$, every node in $\{P_{i,t}| S_i \in S^\pr,t \in [4] \} \cup \{Q_{j,i}| j \in [n],i\in \{1,2\} \}$, i.e. $4b+2n$ nodes will go in $2$-core.
We first observe that deletion of $P_{i,1}$ will make the other three nodes $P_{i,2}$ in $2$-core. Deletion of $b$ such nodes will lead $4b$ nodes falling into $2$-core.
Also, for any $j\in [n]$, if any connection $(Q_{j,i}, P_{i,1}$ gets deleted because of deletion of $P_{i,1}$, both nodes in the set $\{Q_{j,t}| t\in [2]\}$ will go to $2$-core. Since $S^\pr$ forms a set cover for \UU, at least one connection $(Q_{j,l}, P_{l,1})$ for all $j \in [n]$ and some $P_{l,1}\in V^\pr$ will get removed. Thus a total of $2m$ nodes will go in $2$-core. Hence, the TMCV instance is a \YES instance.

For the other direction, we assume that there exists a subset $V^\pr=\{P_{i,1}|i\in [b]\}$ (by renaming) of nodes of size $b$ of the set $\Gamma$ such that in the graph deletion of which would make at least $4b+2n$ nodes fall from the $3$-core. We claim that $S'=\{S_i: i \in [b]\}$ ($b=r$) forms a set cover for \UU. Suppose it is not, then at most $n-1$ connections among $(Q_{j,l}, P_{l,1})$ for $j \in [n]$ and some $P_{l,1}\in V^\pr$ will get deleted. Thus, at most $2(n-1)$ nodes will go into the $2$-core making it a total of $4b+2(n-1)$ nodes falling from $3$-core. Hence, this is a contradiction and $S'$ is a set cover. 
So, the TMCV problem is NP-hard when the maximum degree is constant ($\Delta=6$). Thus, the TMCV problem is para-NP-hard parameterized by the maximum degree ($\Delta$) in the graph.
\end{proof}

We conclude this section with the observation that the TMCV
problem is fixed parameter tractable parameterized by $|\Gamma|$. The algorithm simply tries all possible subsets of $|\Gamma|$ of size at most $b$.
\begin{observation}\label{obs:tmcv_param_gamma}
There is an algorithm for the TMCV problem running in time $O(2^{|\Gamma|} poly(n))$. In particular, TMCV is
fixed parameter tractable parameterized by $|\Gamma|$.
\end{observation}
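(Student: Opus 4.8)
The plan is a direct brute-force search over the candidate set. Any feasible solution $B$ must satisfy $B\subseteq\Gamma$ and $|B|\le b$, so the number of candidate solutions is at most $\sum_{i=0}^{b}\binom{|\Gamma|}{i}\le 2^{|\Gamma|}$. First I would enumerate all subsets of $\Gamma$ of size at most $b$; listing them takes $O(2^{|\Gamma|})$ time, with only polynomial overhead per subset for bookkeeping.

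For each enumerated subset $B$, I would evaluate the objective $f(B)$ exactly. As a one-time preprocessing step, compute the coreness $C(v,G)$ of every vertex $v$ using the standard linear-time peeling procedure (repeatedly delete a vertex of minimum current degree, recording that degree as its coreness), which runs in $O(n+m)$ time. Then, for the given $B$, recompute the coreness $C(v,G\setminus B)$ of every $v\in V\setminus B$ from scratch in $O(n+m)$ time, and count the vertices with $C(v,G)>C(v,G\setminus B)$; this count is precisely $f(B)$.

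Finally I would output a subset $B^{\star}$ attaining the maximum value of $f$ over all enumerated $B$. Correctness is immediate because every feasible solution is inspected. The running time is dominated by the at most $2^{|\Gamma|}$ evaluations, each costing $O(n+m)=\mathrm{poly}(n)$, for a total of $O(2^{|\Gamma|}\,\mathrm{poly}(n))$; hence \TMCV is \FPT parameterized by $|\Gamma|$. There is essentially no technical obstacle here — the only fact that needs to be invoked is that the full core decomposition of a graph is polynomial-time (indeed linear-time) computable, which is classical.
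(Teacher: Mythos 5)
Your proposal is correct and matches the paper's argument, which simply tries all subsets of $\Gamma$ of size at most $b$ and notes the total time $O(2^{|\Gamma|}\,poly(n))$. Your added detail that each $f(B)$ can be evaluated via the linear-time core-decomposition (peeling) algorithm is exactly the polynomial overhead the paper implicitly relies on.
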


\subsection{Inapproximability and Algorithm for $D(G)=1$}

In this section, we discuss the traditional hardness spectrum of the TMCV problem. We show a strong inapproximability result---it is NP-hard to achieve any $m^{-l_1}n^{-l_2}$-factor approximation even when $D(G) \geq 3$ for any constants $l_1>1$ and $l_2>1$.

\begin{thm}
\label{thm:inapprox_tmcv}
The TCMV problem is NP-hard to approximate within any $m^{-l_1}n^{-l_2}$-factor approximation even when $D(G) \geq 3$ for any constants $l_1>1$ and $l_2>1$.
\end{thm}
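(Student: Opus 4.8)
The plan is to reduce from an NP-hard problem --- the natural choice is \SC (the source already used in Theorem~\ref{thm:TMCV_param_b}), or one of its restricted \NPC variants --- to TMCV instances that behave in an all-or-nothing way: every \YES instance is mapped to an instance admitting a size-$b$ candidate set whose removal pushes a large number $N$ of vertices out of their cores, while every \NO instance is mapped to an instance in which no size-$b$ subset of $\Gamma$ reduces the coreness of even one vertex, i.e.\ the optimum is exactly $0$. Given such a gap, the inapproximability is immediate from a counting argument: the objective $f$ is integer-valued and $f(B)<n$ for every $B$, so for any constants $l_1,l_2>1$ we have $m^{-l_1}n^{-l_2}\,f(B) < m^{-l_1}n^{1-l_2}\le n^{1-l_2}<1$; hence a hypothetical polynomial-time $m^{-l_1}n^{-l_2}$-approximation is forced to output a solution of value $\ge 1$ on every \YES instance (its value is $\ge m^{-l_1}n^{-l_2}\cdot\mathrm{OPT}>0$, hence a positive integer) and is forced to output $0$ on every \NO instance (where the optimum, and therefore every feasible value, is $0$). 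Thus it would decide the source problem in polynomial time, contradicting $\mathsf{P}\ne\mathsf{NP}$. (The same argument in fact rules out \emph{every} factor in $(0,1]$, not merely $m^{-l_1}n^{-l_2}$.) The degeneracy requirement is handled for free: a disjoint $K_4$ already forces $D(G)\ge 3$, and the construction below contains such cliques, so $D(G)$ can be kept at a small constant (at least $3$).

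For the construction I would keep every ``interesting'' vertex at coreness $3$, so that, by the observation that deleting a vertex of coreness $c$ can only lower the coreness of vertices of coreness at most $c$, deleting candidate vertices can only ever touch this layer, while a robust high-coreness base carries everything else. Concretely: (i) for each set $S_i\in\SS$ a candidate vertex $P_i\in\Gamma$ joined by three independent edges to the base, so that deleting $P_i$ \emph{alone} changes nobody's coreness; (ii) for each $u_j\in\UU$ a fragile element-gadget wired to the vertices $P_i$ with $u_j\in S_i$; and (iii) a payload of $N$ vertices (with $N$ a large but polynomially bounded quantity) gated so that it can be toppled, and then cascade-collapses, precisely when every element-gadget has been hit --- that is, precisely when $\{P_i : P_i \text{ deleted}\}$ covers \UU. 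On a \YES instance, deleting the $P_i$'s of a size-$r$ cover hits all element-gadgets, topples the payload, and collapses $\Theta(N)\ge 1$ vertices; on a \NO instance, no size-$b$ subset of $\Gamma$ covers \UU, so by design no coreness changes at all and the optimum is $0$.

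The crux --- and the step I expect to be the main obstacle --- is gadget (iii): forcing \emph{incomplete} coverage to cause exactly zero disruption while \emph{complete} coverage triggers a cascade. This is delicate because $k$-core peeling is monotone and naturally records partial drops, so the element-gadgets must carry enough redundant support (degree strictly above their coreness) that no single hit topples anything, yet the payload must still be able to detect ``all elements hit'' and back-propagate a collapse through those gadgets; balancing the slack so that an incomplete selection leaves every vertex at its original coreness, while a complete cover simultaneously removes the last unit of slack everywhere the cascade must travel, is the technical heart of the proof. Once that gadget is in place, the remaining obligations --- that the reduction runs in polynomial time, that the value bound above holds, and that $D(G)\ge 3$ --- are routine.
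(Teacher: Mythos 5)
Your high-level framing (reduce from \SC, keep all ``interesting'' vertices at a small constant coreness, and use a large payload that collapses exactly when the chosen candidates form a cover) is in the spirit of the paper, but the gap you aim for is different and, crucially, unattainable. You want \NO instances to map to graphs in which \emph{no} budget-sized subset of $\Gamma$ lowers the coreness of even one vertex (optimum exactly $0$), and you yourself flag the gadget enforcing this as the unresolved ``technical heart''. That gadget is not merely delicate --- it cannot exist (unless $\Pshort=\NPshort$). If any vertex drops after deleting $B$, then, peeling the $c$-core of $G$ for $c$ the largest coreness among dropped vertices after removing $B$, the first peeled vertex $u\notin B$ has its degree \emph{inside its own $c$-core} pushed below $c$ purely by deleted candidate neighbours, i.e.\ $|N(u)\cap B\cap S_c(G)|\ge d_{S_c(G)}(u)-c+1$; conversely, whenever some vertex $u$ has at least $d_{S_c(G)}(u)-c+1$ candidate neighbours inside $S_c(G)$ and $b$ is at least that number, deleting exactly those neighbours forces $u$ out of the $c$-core. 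Hence ``is the optimum $\ge 1$?'' is decidable in polynomial time by scanning all pairs $(u,c)$, so a polynomial-time reduction producing a $0$-versus-positive gap from an \NPH problem is impossible; your parenthetical remark that the same argument would rule out \emph{every} factor in $(0,1]$ is precisely the symptom of this overreach. More concretely, in any \SC-style construction where element gadgets ``react'' to being hit, deleting a single candidate covering a single element already gives $f\ge 1$, so \NO instances inevitably have positive optimum.

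The paper therefore does not try to eliminate this partial credit; it drowns it out. Its construction (four-cliques $P_{i,\cdot}$ for sets, $m$-cycles $Q_{j,\cdot}$ for elements, everything at coreness $3$) lets a non-covering selection affect only about $3r+(n-1)m$ vertices, but it attaches a hub $R$ to the element cycles and, behind $R$, a degree-$3$ payload $\TT$ of $10(m^{l_1}n^{l_2})^2$ vertices that falls out of the $3$-core only when \emph{all} element cycles have dropped, i.e.\ only when the deleted candidates form a cover. The resulting multiplicative gap between the \YES and \NO optima exceeds $m^{l_1}n^{l_2}$, which is exactly what the stated factor requires. If you replace your all-or-nothing gadget (iii) by such a polynomially large payload gated behind a single hub, and accept a small positive optimum on \NO instances, your outline becomes essentially the paper's proof; as written, the central step is both missing and unrealizable.
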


\begin{proof}
To prove our claim, first let us consider a reduction from the \SC problem. Let $(\UU=\{ u_{1},u_{2},...,u_{n} \},\SS=\{S_{1},S_{2},...,S_{m}\}, r)$ be an instance of the \SC problem. We define a corresponding TMCV problem instance via constructing a graph $\GG$ as follows.


We create a clique of four vertices ($P_{i,1},\cdots,P_{i,4}$) for each $S_i \in \SS$. 
For each $u_j \in \UU$, we create a cycle of $m$ vertices $Q_{j,1}, Q_{j,2},\cdots, Q_{j,m}$ with edges $(Q_{j,1},Q_{j,2}), \cdots,(Q_{j,m-1},Q_{j,m}), (Q_{j,m},Q_{j,1})$.

We also create a vertex $R$ along with a connected sub-graph on a set $\TT$ of $10 (m^{l_1}n^{l_2})^2$ vertices with degree exactly $3$ (for example, we can take a perfect matching between two cycles on $|\TT|/2$ vertices each). The node $R$ is attached with exactly two vertices in $\TT$. Furthermore, edge $(P_{i,1},Q_{j,i})$ will be added to $\EE[\GG]$ if $u_j\in S_i$. Additionally, if $u_j\notin S_i$, edge $(Q_{j,i},R)$ will be added to $\EE$. Clearly the reduction takes polynomial time. The candidate set, $\Gamma_v = \{ P_{i,1}| \forall{i=1,2,...,m}\}$.  Fig. \ref{fig:TCMV_inapprox} illustrates our construction for sets $S_1=\{u_1,u_2\},S_2=\{u_1,u_2,u_4\}$ and $S_3=\{u_3\}$. 

Initially in $\GG$, all vertices are in the $3$-core. We claim that a set $S'\subset S$, with $|S'|\leq r$, is a cover iff $f(B)= 3r + mn + 1 + |\TT|$ where  $B\!=\!\{P_{i,1}| S_i \in S' \}$. 

Let us assume that the \SC instance is a \YES instance and, by renaming, $S^\pr=\{S_1, \ldots, S_r\}$ forms a valid set cover of the instance. We delete the nodes in the set $V^\pr=\{P_{i,1}| i\in[r]\}$ in the graph $\GG$. We claim that by deleting the nodes in the set $V^\pr$, every node in $\{P_{i,t}| S_i \in S^\pr,t \in [4] \} \cup \{Q_{j,i}| j \in [n],i\in [m] \}$, i.e. $4b+mn$ nodes will go to the $2$-core. We first observe that deletion of $P_{i,1}$ will make the other three nodes $P_{i,2}$ to $2$-core. 
Also, for any $j\in [n]$, if any connection $(Q_{j,i}, P_{i,1})$ gets deleted because of deletion of $P_{i,1}$, all the $m$ nodes in the set $\{Q_{j,t}| t\in [m]\}$ will go to $2$-core. Since $S^\pr$ forms a set cover for $\UU$, at least one connection $(Q_{j,l}, P_{l,1})$ for all $j \in [n]$ and some $P_{l,1} \in V^\pr$ will get removed. Note that if
all the nodes $Q_{j,i}$;$\forall{j\in [n]}$ and $\forall{i\in [m]}$ go to $2$-core, the node $R$ will go to $2$-core and thus all the nodes in $\TT$ will follow the same. Thus a total of $3r+mn+1+|\TT|$ nodes will go to $2$-core. 

If there is no set cover of size $r$, then at most $n-1$ connections among $(Q_{j,l}, P_{l,1})$ for $j \in [n]$ and some $P_{l,1}\in V^\pr$ will get deleted. Thus, at most $(n-1)m$ nodes will go into the $2$-core making it a total of $3r+(n-1)m$ nodes falling from the $3$-core. Note that the node $R$ will be still in the $3$-core and thus the nodes in set $\TT$ will remain unaffected in the $3$-core. Hence, a total of $3r+(n-1)m$ nodes will go to $2$-core. 
So, the multiplicative difference of $f$s corresponding to the yes instance ($3r+mn+1+|\TT|$) and no instance ($3r+(n-1)m$) of the \SC problem is less than $m^{-l_1}n^{-l_2}$ and thus TMCV cannot be approximated within $m^{l_1}n^{l_2}$ factor unless $P\ne NP$. 
\end{proof}
\begin{figure}[t]

    \centering
    {\includegraphics[width=0.32\textwidth]{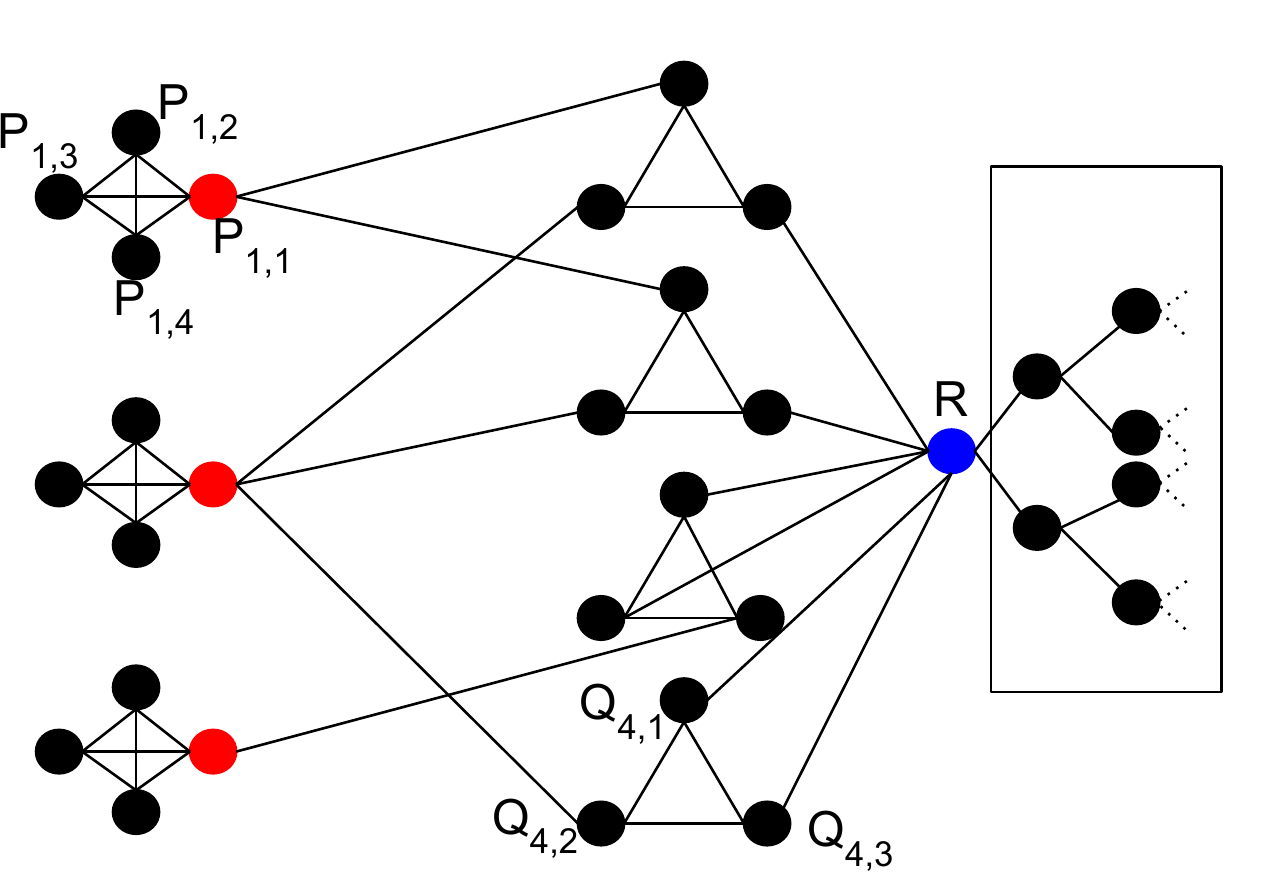}}
    \caption{Example of construction to prove inapproximability of TMCV: reduction from Set Cover where  $U=\{u_1,u_2,u_3,u_4\}, S=\{S_1,S_2,S_3\}, S_1=\{u_1,u_2\},S_2=\{u_1,u_2,u_4\},S_3=\{u_3\}$. The number of nodes in the rectangular box is $10 (m^{l_1}n^{l_2})^2$ and the nodes have exactly degree $3$. \label{fig:TCMV_inapprox}}
  
\end{figure}


We show next that the TCMV problem is polynomial time solvable if the degeneracy of the input graph is $1$.

\begin{thm} \label{lemma:np_hard_d_1_2}
The TCMV problem is polynomial time solvable if $D(v,G) =1$.
\end{thm}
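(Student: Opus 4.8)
The plan is to exploit the fact that $D(G)=1$ forces $G$ to be a forest and then give a dynamic program over its trees. A graph has degeneracy at most $1$ exactly when it is acyclic, so here $G$ is a forest, and in a forest a vertex has coreness $1$ if it has at least one neighbour and coreness $0$ if it is isolated. Hence, for any deletion set $B$, the set of affected vertices is exactly $\{v\in V\setminus B : d(v,G)\ge 1 \text{ and } N(v)\subseteq B\}$: deleting $B$ can only turn non-isolated vertices into isolated ones, and an isolated vertex has no neighbours, so there is no further cascade. In particular a vertex $v$ can ever be affected only if $N(v)\subseteq\Gamma$; such vertices are marked \emph{coverable} and the rest are never counted. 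So the problem reduces to: choose $B\subseteq\Gamma$ with $|B|\le b$ maximising the number of coverable vertices all of whose neighbours are chosen.

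The main step is a knapsack-style dynamic program, run independently on each tree $T$ of the forest and then combined across trees. Root $T$ arbitrarily. For a vertex $v$ with children $c_1,\dots,c_k$, let $T_v$ be the subtree rooted at $v$ and maintain a table $\mathrm{dp}[v][\beta][x][y]$ equal to the maximum number of affected vertices strictly below $v$ (i.e.\ among $V(T_v)\setminus\{v\}$) over all choices of $B\cap V(T_v)$ using at most budget $\beta$, where $x\in\{0,1\}$ records whether $v\in B$ and $y\in\{0,1\}$ records whether \emph{all} children of $v$ are in $B$ ($y=1$ vacuously at a leaf). We deliberately leave $v$'s own status out of the stored count, because whether $v$ is affected is determined jointly by its children (all in $B$?) and its parent (in $B$?), and only the former is known inside $T_v$. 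The recurrence merges the children by a knapsack over the budget split: processing children one at a time while carrying along the AND of their selection bits (which becomes $y$); a child $c_i$ contributes $1$ to the count precisely when $v\in B$, $c_i\notin B$ and all of $c_i$'s children are in $B$; and putting $v$ itself in $B$ costs one unit and is permitted only if $v\in\Gamma$. At the root $\rho$ we read off the profit function $h_T(\beta)=\max_{x,y}\bigl(\mathrm{dp}[\rho][\beta][x][y]+\delta\bigr)$, where $\delta=1$ if $\rho$ is non-isolated, $x=0$ and $y=1$, and $\delta=0$ otherwise.

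Finally we combine the trees $T_1,\dots,T_q$ of the forest by one more knapsack: the optimum is $\max\{\sum_{j=1}^{q} h_{T_j}(\beta_j) : \beta_j\ge 0,\ \sum_j\beta_j\le b\}$, a standard DP. Every table has $O(n)$ budget entries and a constant number of status bits, each knapsack merge costs polynomial time, and $b\le n$, so the whole algorithm runs in polynomial time (a naive implementation is about $O(n^3)$). Correctness follows by induction on the rooted tree structure from the characterisation of affected vertices above.

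The only real difficulty is the bookkeeping inside the tree DP: because a vertex's ``affected'' status depends on its parent as well as on all of its children, the state must carry the bit ``are all children in $B$?'' and defer the $+1$ for the vertex itself up to the parent, and getting the interaction between that bit and the children knapsack right is where care is needed. It is worth stressing that no trivial greedy rule works -- e.g.\ on the path $a\!-\!b\!-\!c\!-\!d\!-\!e$ with $\Gamma=V$ and $b=2$, the best solution is to delete the \emph{internal} pair $\{b,d\}$, which isolates all of $a,c,e$, whereas deleting the two endpoints isolates nothing -- so the dynamic program is genuinely required.
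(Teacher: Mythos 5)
Your proposal is correct and follows essentially the same route as the paper: reduce $D(G)=1$ to a forest, observe that a vertex is affected exactly when it is non-isolated and all its neighbours are deleted, and run a rooted-tree knapsack dynamic program over the budget with a constant-size per-vertex state, combining the trees by a final knapsack. The only difference is bookkeeping -- you defer each vertex's $+1$ to its parent via the ``all children deleted'' bit, while the paper counts the vertex inside its own subtree (its tables $A,B,C,D$) and applies a $-1$ correction when the parent is not deleted -- which is an equivalent formulation, not a different argument.
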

\begin{proof}
Let $(G=(V,E),\Gamma\subseteq V, b)$ be any instance of TMCV such that $D(G)=1$. Since $D(G)=1$, it follows that $G$ is a forest. Let $G=T_1 \cup \cdots \cup T_k$ for some integer $k$ where $T_i$ is a tree for every $i\in[k]$. We first describe a dynamic programming based algorithm for the TCMV problem that works for trees.

Let $T$ be the input tree and $\Gamma_T\subseteq T$ the subset of vertices which can be deleted. We make the tree rooted at any node $r\in T$. At every node $x\in T$ and every integer $\el\in[b]\cup\{0\}$, we store the following.
\begin{align*}
    A[x,\el] &= \text{maximum number of isolated vertices in } \Gamma_T\cap T_x\\
    & \text{by deleting at most } \el \text{ vertices from $\Gamma_T\cap T_x$}\\
    & \text{subject to the condition that $x$ becomes isolated}\\
    B[x,\el] &= \text{maximum number of isolated vertices in } \Gamma_T\cap T_x\\
    & \text{by deleting at most } \el \text{ vertices from $\Gamma_T\cap T_x$ subject to}\\
    & \text{the condition that $x$ is neither isolated nor deleted}\\
    C[x,\el] &= \text{maximum number of isolated vertices in } \Gamma_T\cap T_x\\
    & \text{by deleting at most } \el \text{ vertices from $\Gamma_T\cap T_x$ subject to}\\
    & \text{the condition that $x$ is deleted}\\
    D[x,\el] &= \text{maximum number of isolated vertices in } \Gamma_T\cap T_x\\
    & \text{by deleting at most } \el \text{ vertices from $\Gamma_T\cap T_x$}
\end{align*}

From the definitions of $A[x,\el], B[x,\el], C[x,\el],$ and $D[x,\el]$, the following recurrences follow. Let the children and grandchildren of $x$ be respectively $y_1,\ldots,y_i$ and $z_1,\ldots,z_j$ ($j$ could be $0$).
\begin{align*}
    A[x,\el] &= \mathbbm{1}(\el\ge i) + \max\{D[z_1,\el_1]+\cdots+D[z_j,\el_j] :\\& \el_1+\cdots+\el_j \le \el-i \}\\
    B[x,\el] &= \max \bigcup_{\lambda=1}^i\mathlarger{\mathlarger{\{}} \max\{A[y_1,\el_1]-1, B[y_1,\el_1], C[y_1,\el_1]\}+\\
    \ldots&\max\{A[y_{\lambda-1},\el_{\lambda-1}]-1, B[y_{\lambda-1},\el_{\lambda-1}], C[y_{\lambda-1},\el_{\lambda-1}]\}\\
    &+\max\{A[y_\lambda,\el_\lambda]-1,B[y_\lambda,\el_\lambda]\}+\\
    &\max\{A[y_{\lambda+1},\el_{\lambda+1}]-1, B[y_{\lambda+1},\el_{\lambda+1}], C[y_{\lambda+1},\el_{\lambda+1}]\}\\
    +\ldots & +\max\{A[y_i,\el_i]-1, B[y_i,\el_i], C[y_i,\el_i]\}: \\
    &\el_1+\cdots+\el_i\le \el\mathlarger{\mathlarger{\}}}\\
    C[x,\el] &= \max\{D[y_1,\el_1]+\cdots+D[y_i,\el_i]:\\
    &\el_1+\cdots+\el_i\le\el-1\}\\
    D[x,\el] &= \max\{A[x,\el], B[x,\el], C[x,\el]\}
\end{align*}

We make the convention that the maximum over an empty set is $0$. For every leaf node $x$, we initialize $A[x,\el], B[x,\el], C[x,\el],$ and $D[x,\el]$ as follows.
\begin{align*}
    A[x,\el] &= 1, & \el\ge0\\
    B[x,\el] &= -1, & \el\ge0\\
    C[x,\el] &= \begin{cases}
    -1, & \el=0\\
    0, & \el>0
    \end{cases}\\
    D[x,\el] &= 1, &\el\ge0
\end{align*}

We observe that, given the tables at every descendant vertex of $x$, $A[x,\el]$ can be computed by a standard dynamic programming based algorithm for the knapsack problem in time $\OO(j)$~\cite{vazirani2013approximation}. Similarly, $B[x,\el]$ and $C[x\el]$ can be computed respectively in $\OO(i^2)$ and $\OO(i)$ time. Hence the running time of our algorithm is $\OO(n^2b)=\OO(n^3)$.
\end{proof}

\begin{figure*}[t]
    \centering
  {\includegraphics[width=0.8\textwidth]{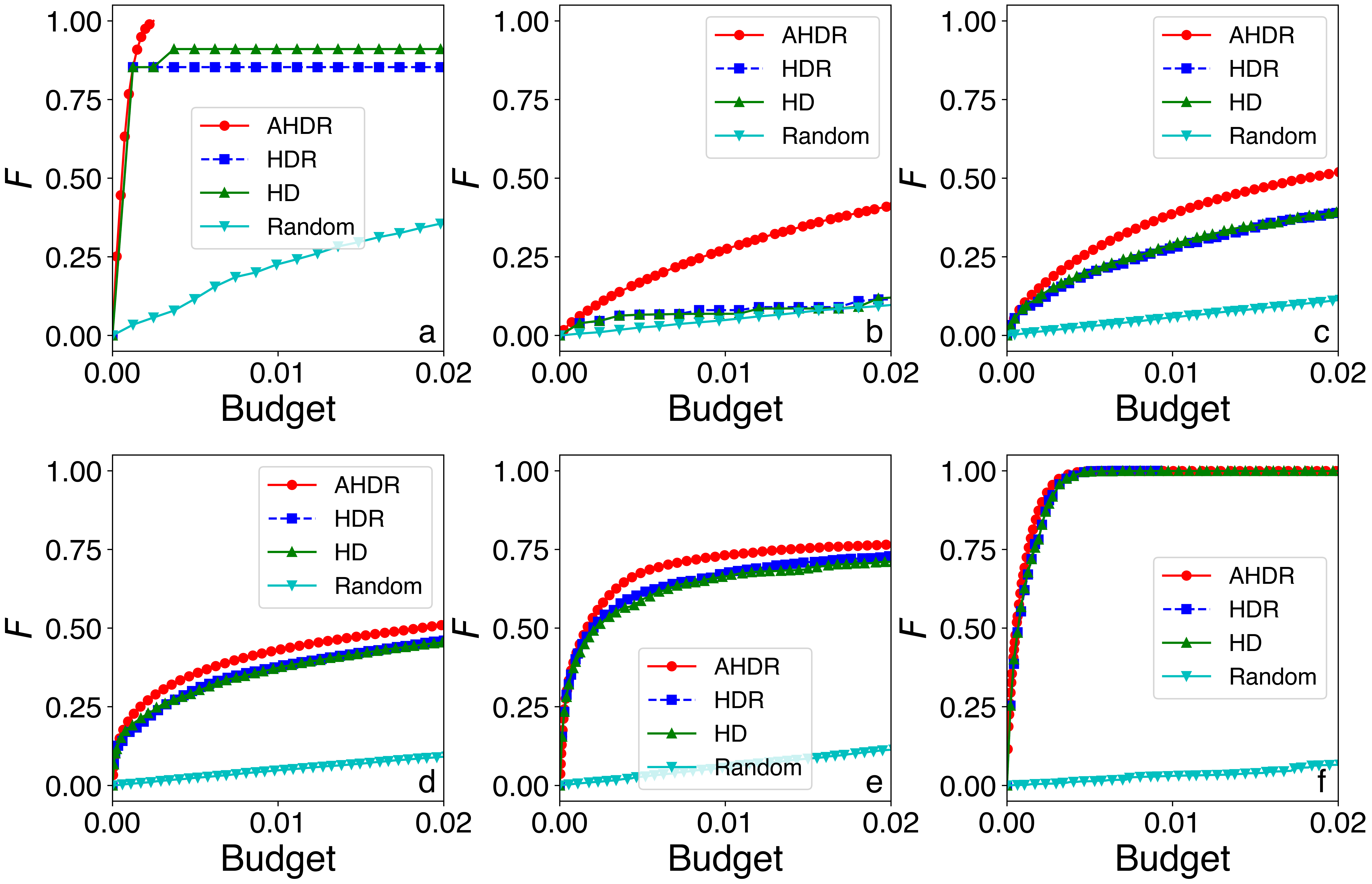}}

    \caption{ The performance of different heuristics varying the budget, $b$ (percentage of total nodes to delete) on real datasets: (a) Facebook, (b) GrQc, (c) CondMat, (d) BrightKite, (e) Enron, (f) g+.  \label{fig:algo_varying_budget}}
\end{figure*}

\section{Empirical Results} \label{sec:exp}
In this section, we motivate the \TMCV problem using real applications and simple heuristics. We show that our problem can be used to characterize different types of networks and to understand the relationship between protein-protein interaction (PPI) networks and the evolution of species.


\begin{table}[t]
	\centering
	
	\begin{tabular}{|c|c|c|c|c|}
	\hline
		\textbf{Dataset}& Type &$|V|$ & $|E|$ & $D$\\\hline
		Enron & Email & 36692 & 183831 & 43\\\hline
		GrQc & Co-authorship & 4158 & 13422 & 43\\\hline
CondMat & Co-authorship & 21363 & 91286 & 25\\\hline
		Facebook& Social & 4039 & 88234 & 115 \\\hline
g+ & Social & 23628 & 39194 & 12\\\hline

BrightKite & Social & 58228 & 214078 & 52\\\hline

	\end{tabular}
	\caption{Statistics of Datasets: $D$ denotes degeneracy, i.e., the maximum core.}\label{tab:dataset}
\end{table}

\subsection{Robustness and Characterization of Networks}
\label{sec:exp_budget}

We evaluate our robustness measure using different networks and show interesting properties of those via a few heuristics. We measure the performance of each algorithm by a disruption measure $F$ which is defined by the fraction of nodes getting affected (reduction of initial coreness) due to the deletion of the nodes in the solution set generated by each algorithm. A network is more robust if it has a lower value of $F$. We denote the modified graph $G$ as $\mathbb{G}_v^*$ after deleting a set $B$ consist of $b$ vertices (nodes). Formally, 
\begin{equation}
    F(B) =\frac{f(B)}{|V|}=\frac{|\{v\in V: C(v,G)>C(v,\mathbb{G}_v^*)\}|}{|V|}
\end{equation}

\paragraph{Datasets:} We use six real datasets from different genres in our experiments. Table \ref{tab:dataset} and Figure \ref{fig:core_disn_real} describe the statistics and the core distributions of the datasets, respectively. The datasets are available in \cite{nr-aaai15} and online\footnote{https://snap.stanford.edu/data}.

\subsubsection{Heuristics}
We describe the heuristics below.

Random: This algorithm chooses $b$ nodes randomly from the set of all nodes in the graph. The random strategy has been used in the past to enhance network robustness \cite{tang2015enhancing}.

High Degree (HD): It chooses top $b$ nodes according to their degree. Coreness is related to degree and the nodes in higher core usually contribute to the coreness in the lower core. So, this strategy uses degree as a proxy of the coreness. Intuitively, the algorithm should work well with the presence of sensitive nodes, i.e., when the degree of a node is equal to its individual coreness.

\begin{table}[h]
	\centering
	
	\begin{tabular}{|c|c|c|c|c|c|}
	\hline
		\textbf{Affected Size}& $0.1$ & $0.2$ & $0.3$ & $0.4$ & $0.5$\\\hline
		Enron & 3 & 7 & 16 & 33 & 61\\\hline
		GrQc & 10& 26 & 49 & 79 & 121\\\hline
CondMat & 22& 66 & 132 & 229& 385\\\hline
		Facebook& 1 & 1 & 2 & 2 & 3\\\hline
g+ & 1 & 3 & 6 & 9 & 14\\\hline

BrightKite & 10& 53& 169& 438 & 1076\\\hline

	\end{tabular}
	\caption{The effect of AHDR on different networks: The cell (Enron, 0.1) as 3 represents that only 3 nodes to be deleted to achieve $F= 0.1$, i.e., to affect 10\% of the Enron network. }\label{tab:AHDR_real}
\end{table}


High Disruption (HDR): The algorithm chooses top $b$ nodes according to their ``strength" in making nodes fall from their corresponding $k$-core. This strategy is more related with our objective function compared to the random and degree based heuristics. However, it requires the computation of the ``strength" for each node. The running time of this algorithm is $O(n^2+nm)$. 

Adaptive High Disruption (AHDR): It chooses the best node in each iteration for the budget number of iterations. However, in each step one needs to recompute the strength of the nodes given that already chosen nodes are deleted from the graph. 
A naive implementation of this strategy would take $O(bn(n+m))$ time, where $b$ is the budget. However, we are able to optimize this approach based on a few observations.

\begin{observation}
Deletion of a node $v$ might reduce coreness of another node $u$ only when $C(v,G)\geq C(u,G)$. There will not be any effect in deleting $v$ on $u$ if $C(v,G)< C(u,G)$.
\end{observation}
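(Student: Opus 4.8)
The plan is to lean entirely on the extremal characterization of the $k$-core as the \emph{unique maximal} induced subgraph with minimum degree at least $k$, together with the easy monotonicity of coreness under vertex deletion.

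First I would record the monotonicity fact: for any induced subgraph $H$ of $G$ and any vertex $w \in V(H)$ we have $C(w,H) \le C(w,G)$. The reason is that $S_k(H)$ is induced in $H$ and $H$ is induced in $G$, so $S_k(H)$ is itself an induced subgraph of $G$ in which every vertex has degree at least $k$ (degrees computed inside $S_k(H)$ are the same whether we view it inside $H$ or inside $G$); by maximality of $S_k(G)$ this forces $S_k(H) \subseteq S_k(G)$, and hence $w \in S_k(H) \Rightarrow w \in S_k(G)$. Specializing to $H = G \setminus \{v\}$ gives $C(u, G \setminus \{v\}) \le C(u,G)$ unconditionally, so ``$u$ is not affected'' is the same as ``$C(u,G\setminus\{v\}) = C(u,G)$''.

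The substantive step is the reverse inequality in the case $C(v,G) < C(u,G)$. Put $k := C(u,G)$. By definition of coreness, $C(v,G) < k$ means $v \notin S_k(G)$. Therefore deleting $v$ from $G$ removes neither a vertex nor an edge of $S_k(G)$, so $S_k(G)$ is still an induced subgraph of $G \setminus \{v\}$, and every vertex of it still has degree at least $k$ there. By maximality of $S_k(G \setminus \{v\})$ we get $S_k(G) \subseteq S_k(G \setminus \{v\})$; since $u \in S_k(G)$, this yields $C(u, G \setminus \{v\}) \ge k = C(u,G)$. Together with the monotonicity bound above, $C(u,G\setminus\{v\}) = C(u,G)$, i.e.\ $u$ is unaffected. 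The contrapositive formulation stated in the observation (``deletion of $v$ can reduce the coreness of $u$ only when $C(v,G) \ge C(u,G)$'') is then immediate.

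I do not expect a real obstacle here; the only points requiring care are (i) invoking the correct maximality property of the $k$-core rather than just its minimum-degree property, and (ii) observing that the degrees inside $S_k(G)$ are unchanged when a vertex lying outside $S_k(G)$ is deleted, which is what lets $S_k(G)$ survive intact inside $G \setminus \{v\}$.
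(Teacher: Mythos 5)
Your proof is correct: the paper states this observation without any proof, and your argument via the maximality (and uniqueness) of the $k$-core --- noting that $v\notin S_k(G)$ for $k=C(u,G)$, so $S_k(G)$ survives intact in $G\setminus\{v\}$ and is absorbed into $S_k(G\setminus\{v\})$, combined with monotonicity of coreness under vertex deletion --- is exactly the standard justification the authors implicitly rely on. No gaps worth flagging.
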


\begin{observation}
Based on the previous observation, node $v$ can be pruned from the candidate set $\Gamma$ if $C(u,G)> C(v,G), \forall{u\in N(v)}$ where $N(v)$ denotes the set of neighbors of $v$.
\end{observation}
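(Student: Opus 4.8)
The statement to establish is that when every neighbour of $v$ has strictly larger coreness than $v$, the vertex $v$ is useless as a deletion candidate, i.e.\ $f(\{v\})=0$, and hence it can be dropped from $\Gamma$ without affecting any (adaptive) step that evaluates single‑vertex disruption. So the heart of the matter is the purely graph‑theoretic claim: \emph{if $C(u,G)>C(v,G)$ for all $u\in N(v)$, then $C(w,G\setminus\{v\})=C(w,G)$ for every $w\in V\setminus\{v\}$.}

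My plan is a two‑stage argument. Stage one handles the direct neighbours of $v$: by Observation~1, deleting $v$ can lower $C(u,\cdot)$ only when $C(v,G)\ge C(u,G)$, which fails for every $u\in N(v)$ by hypothesis, so no neighbour of $v$ loses coreness. Stage two rules out propagation to non‑neighbours by a witness‑subgraph / minimal‑counterexample argument. Suppose some $w\ne v$ had $C(w,G\setminus\{v\})<C(w,G)$ and put $k:=C(w,G)$. Observation~1 applied to the ordered pair $(v,w)$ forces $C(v,G)\ge k$ (so $v\in S_k(G)$), and in particular $w\notin N(v)$, since otherwise $k=C(w,G)>C(v,G)\ge k$. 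I would then show that the induced subgraph $S_k(G)\setminus\{v\}$ of $G\setminus\{v\}$ still has minimum degree at least $k$: among the surviving vertices, the only ones whose degree drops are the neighbours of $v$ inside $S_k(G)$, and each such neighbour $u$ satisfies $C(u,G)>C(v,G)\ge k$, hence $u\in S_{k+1}(G)\subseteq S_k(G)$ and therefore already had degree at least $k+1$ inside $S_k(G)$ (using the nesting of cores), leaving degree at least $k$ after $v$ is removed; every other vertex of $S_k(G)$, including $w$, keeps its degree. Since $S_k(G)\setminus\{v\}$ is an induced subgraph of $G\setminus\{v\}$ with minimum degree $\ge k$, it is contained in the $k$‑core $S_k(G\setminus\{v\})$; as $w\in S_k(G)$ and $w\ne v$, this gives $w\in S_k(G\setminus\{v\})$, i.e.\ $C(w,G\setminus\{v\})\ge k=C(w,G)$, contradicting the assumption. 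Hence no vertex is affected, $f(\{v\})=0$, and the pruning is justified.

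The step I expect to need the most care is Stage two: turning the informal slogan ``a coreness drop can only reach $w$ through a neighbour of the deleted vertex'' into a rigorous statement via the explicit witness $S_k(G)\setminus\{v\}$, and especially invoking the nesting $S_{k+1}(G)\subseteq S_k(G)$ so that every neighbour of $v$ has one unit of degree ``to spare'' inside $S_k(G)$. One further point worth stating explicitly is that the hypothesis is written for the input graph $G$, but the same argument applies verbatim to the residual graph at any stage of the adaptive heuristic AHDR; thus the pruning can be (re)applied with respect to the current coreness values in each iteration, which is precisely what makes it useful for speeding up AHDR.
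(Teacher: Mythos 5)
Your proposal is correct, and it actually does more than the paper, which states this observation without any proof beyond the phrase ``based on the previous observation.'' The important point you identify is that Observation~1 alone does \emph{not} immediately yield the pruning rule: it rules out any effect on vertices of strictly larger coreness (in particular on all neighbours of $v$ under the hypothesis), but it says nothing about non-neighbours $w$ with $C(w,G)\le C(v,G)$, to which a coreness drop could in principle cascade. Your Stage two supplies exactly the missing argument: for $k=C(w,G)$, the induced subgraph $S_k(G)\setminus\{v\}$ still has minimum degree at least $k$ in $G\setminus\{v\}$, because the only vertices losing degree are neighbours of $v$ inside $S_k(G)$, and by the hypothesis each such neighbour lies in $S_{k+1}(G)\subseteq S_k(G)$ and hence has a spare unit of degree; maximality of the $k$-core then gives $w\in S_k(G\setminus\{v\})$, so $f(\{v\})=0$. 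This witness-subgraph step (and the nesting of cores it relies on) is sound and is the right way to make the paper's implicit reasoning rigorous; as a by-product it also proves Observation~1 itself if one wishes to avoid citing it. Your closing caveat is also well placed and consistent with the paper's intended use: the rule justifies discarding $v$ only for single-deletion (strength) evaluations, so within AHDR it must be re-applied with coreness values recomputed on the current residual graph rather than once and for all with respect to $G$.
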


\subsubsection{Results of different heuristics on real networks}

We vary the budget and evaluate the performance of each heuristic in all the datasets (Table \ref{tab:dataset}). Figure \ref{fig:algo_varying_budget} shows the results. Budget is the percentage of the total number of nodes in the network. A few interesting results are as follows: (a) AHDR is the most effective heuristic. The closest baseline, HDR, directly computes the effect of edge removal on the TMCV objective. AHDR, unlike others, considers the disruption in the network in an adaptive manner. (b) The efficacy of AHDR is more prominent in the co-authorship networks (CondMat and GrQc). The co-authorship networks often consist of small cliques and thus high degree (HD) or one shot strength computation (HDR) might not be effective to choose the critical nodes. 

Simple network properties such as density plays an important role in robustness. Facebook is a dense graph and the best heuristic, AHDR produces $F=1$ only with an extremely low budget (10 nodes). We further emphasize how robust the individual networks are by showing the number of nodes needed to be deleted by AHDR to affect a large portion of the network in Table \ref{tab:AHDR_real}. The dense structure makes the cores very sensitive and a node removal has high impact. Another interesting observation is that the graphs with the highest (Facebook) and the lowest (g+) densities are easy to disrupt compared to others. This suggests that the robustness does not entirely depend on the density of the graph.


\begin{figure*}[ht]
    \centering
    \subfloat[Real networks]{\includegraphics[width=0.26\textwidth]{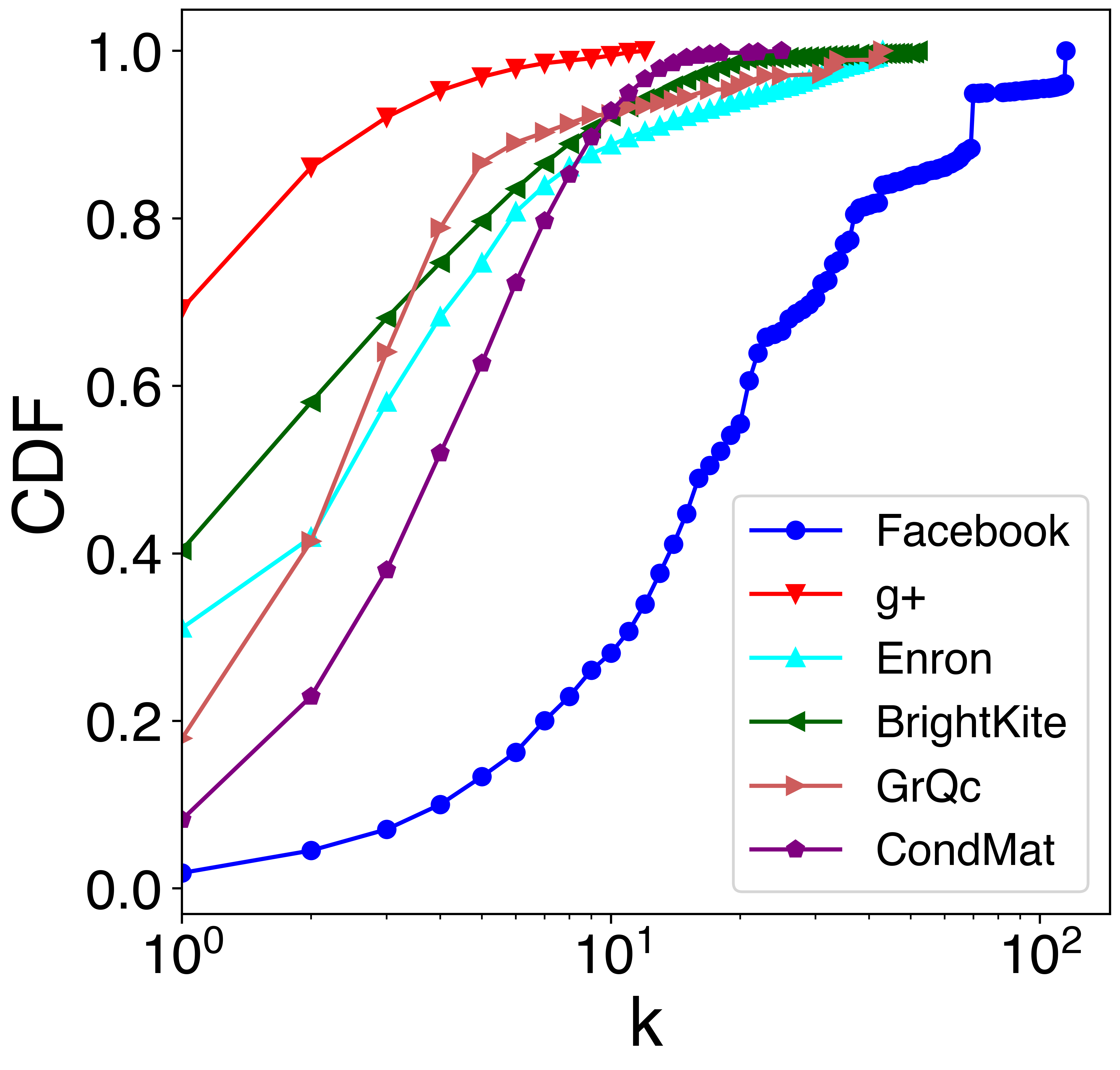} \label{fig:core_disn_real}}
    \subfloat[Networks used in Sec. \ref{sec:exp_syn}]{\includegraphics[width=0.26\textwidth]{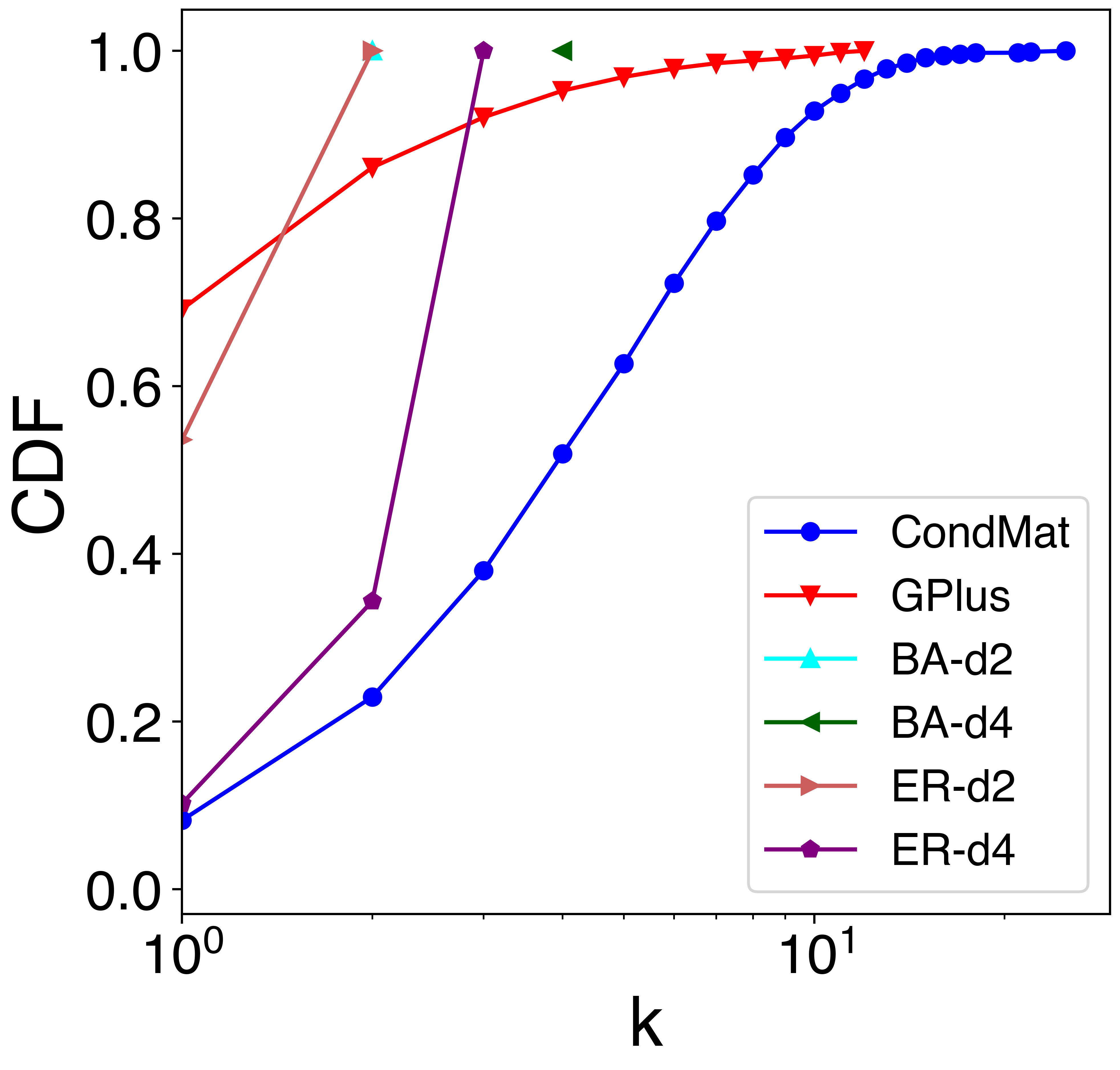}\label{fig:core_disn_syn}}
    \subfloat[Results by AHDR]{\includegraphics[width=0.27\textwidth]{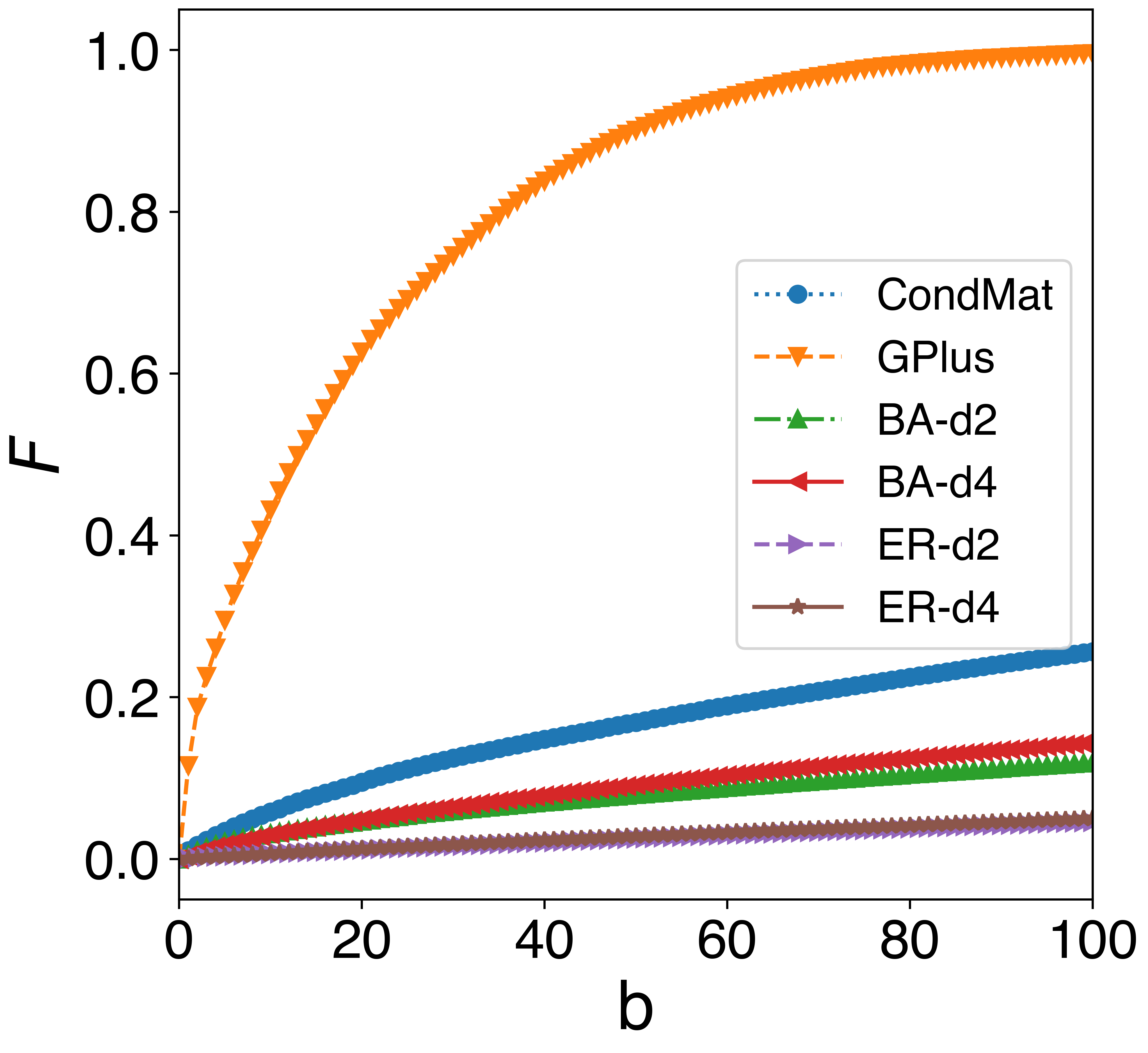}\label{fig:results_syn}}
    
    \caption{(a) The core distributions in the real networks in Table \ref{tab:dataset}. (b) The core distributions of the real and synthetic networks used in the experiments in Sec. \ref{sec:exp_syn}. (c) The performance of the best heuristic (AHDR) varying the number (budget, $b$) of nodes in co-authorship, social and synthetic networks of similar sizes.  \label{fig:synthetic_real}}
    
\end{figure*}

\subsubsection{Synthetic vs real networks} \label{sec:exp_syn} 

Figure \ref{fig:results_syn} shows the impact of the best performing heuristic, AHDR, in co-authorship (CondMat), social (g+) and synthetic ($|V|=20,000$) networks. BA-d2 (BA-d4) and ER-d2 (ER-d4) represent the synthetic network structures from two well-studied models: (a) Barabasi-Albert and (b) Erdos-Renyi, respectively, with average node degree $2$ (degree $4$). Note that all of these six networks have similar number of nodes. The $k$-core distributions of these networks is shown in Fig. \ref{fig:core_disn_syn}. The goal is to compare the robustness of different networks while applying the same algorithm (e.g., AHDR).
We observe that the random network, ER, is the most robust or difficult to break. As the edges are present uniformly across the network, node deletions do not have large affect on the network structure. This is true even with higher density ER graphs (see ER-d2 and ER-d4). Comparing BA and ER, BA is less robust to node removals as a few nodes have high degree and might be part of several cores. On the other hand, the real networks are less robust than both these synthetic networks. Even if the co-authorship network is denser than the social network (g+), the structure of g+ is less robust and ADHD can affect more than 80\% of the network by only removing 50 nodes. 

\subsection{Robustness and Evolution}
\label{sec:regression}


In the last section, we have applied network robustness as a tool to characterize different types of networks (email, co-authorship and social). Here, we use robustness to compare multiple networks of the same type.  Protein-protein interaction (PPI) networks capture how proteins interact to perform various biological functions (e.g., DNA replication, energy production). These networks are relevant in biological and biomedical applications, specially in the study of new treatments for complex diseases, such as cancer and autoimmune disorders \cite{safari2014protein}. Recently, it has been shown that the structure PPI networks is also related to the evolution of species \cite{Zitnik454033}. In particular, evolution was shown to be positively correlated with network resilience. In this section, we evaluate how k-core robustness can help us to better understand this relationship. 

\paragraph{Dataset:} For this study, we apply a subset of the Tree of Life dataset\footnote{http://snap.stanford.edu/tree-of-life}, which combines PPI networks and an evolution score---based on the depth in the phylogenetic tree---for 63 species. The species selected were those with at least 1,000 publications in the NCBI PubMed and belonging to the \textit{Bacteria} and \textit{Archaea} domains. 

\paragraph{Baseline:} We compare our resilience measure against the one applied in \cite{Zitnik454033}. More specifically, their approach measures how fragmented the network becomes after the removal of a fraction $\alpha$ of nodes selected at random. Once a node is removed, all its edges are also removed from the network. The fragmentation of $G_{\alpha}$ is measured based on a modified version of the Shannon divergence of the resulting connected components $\{V_1, V_2, \ldots V_K\}$:
\begin{equation*}
    H(G_{\alpha}) = \frac{1}{\log n} \sum_{k=1}^K p_k \log p_k
\end{equation*}
where $p_k=|V_k|/n$ and the $1/\log n$ factor enables comparing graphs with different sizes. 

The overall resilience of a network $G$ is computed as the area under the curve produced varying $\alpha$ from 0 to 1:
\begin{equation*}
    Resilience_{rand}(G) = 1 - \int_0^1 H(G_{\alpha})d\alpha
\end{equation*}

\paragraph{K-core Resilience:} We propose a resilience metric similar to the one defined above but replacing the Shannon entropy by the fraction of nodes out of their k-core:
\begin{equation*}
    Resilience_{core}(G) = 1 - \int_0^1 F(B_{\alpha})d\alpha
\end{equation*}
where $F(B_{\alpha})$ is the fraction of nodes affected after $\alpha|V|$ nodes are removed from $G$.

Similar to \cite{Zitnik454033}, we also apply our measures only to the largest connected component of each network. Moreover, we emphasize two key differences between our resilience metric ($Resilience_{core}$) and $Resilience_{rand}$. First, ours takes into account the k-core instead of the connected components in the graph. Second, we do not remove nodes at random, but as to maximize $F(B_{\alpha})$.




Figure \ref{fig::zitnik_resilience} shows the correlation between $Resilience_{rand}$ and the evolution of species. Notice that the measures have a weak correlation, with a Pearson's coefficient of $0.0325$ and a p-value of $0.80$. As a consequence, we are unable to reject the hypothesis that the variables are in fact uncorrelated. Notice that we consider a subset of the species from \cite{Zitnik454033}---with only the domains \textit{Bacteria} and \textit{Archaea}. Still, one would expect the correlation between evolution and resilience to also hold within these domains.  

\begin{figure}[h]
    \centering
    {\includegraphics[width=0.46\textwidth]{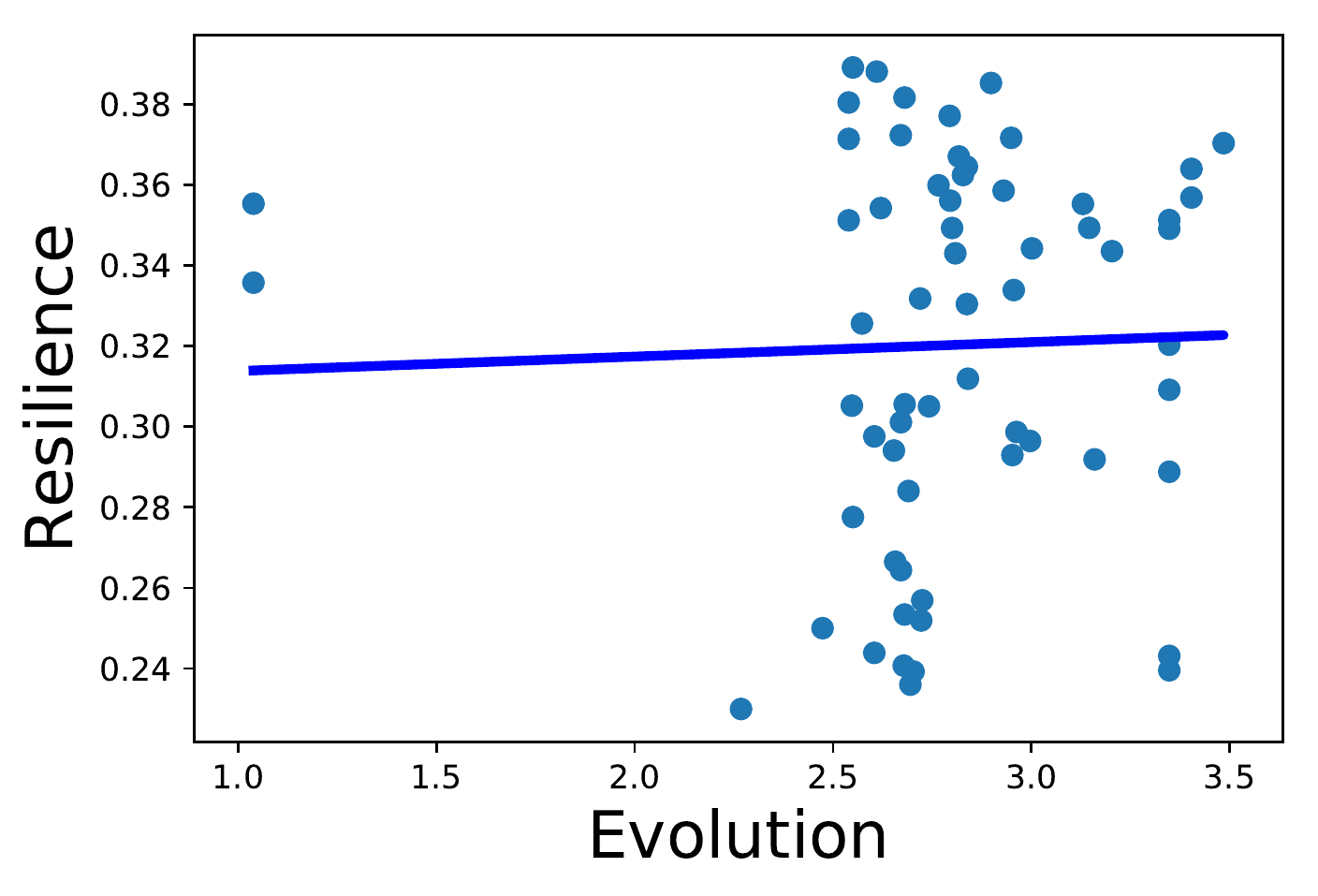}}
     
    \caption{Correlation (coefficient=$\mathbf{0.0325}$,p-value=$0.80$) between the evolution of species (x) and resilience (y) measured using $Resilience_{rand}$, which is based on random node removals \cite{Zitnik454033}. The plot and correlation values show that there is not a strong correlation between the measures. \label{fig::zitnik_resilience}}
\end{figure}

\begin{figure}[h]
    \centering
    {\includegraphics[width=0.46\textwidth]{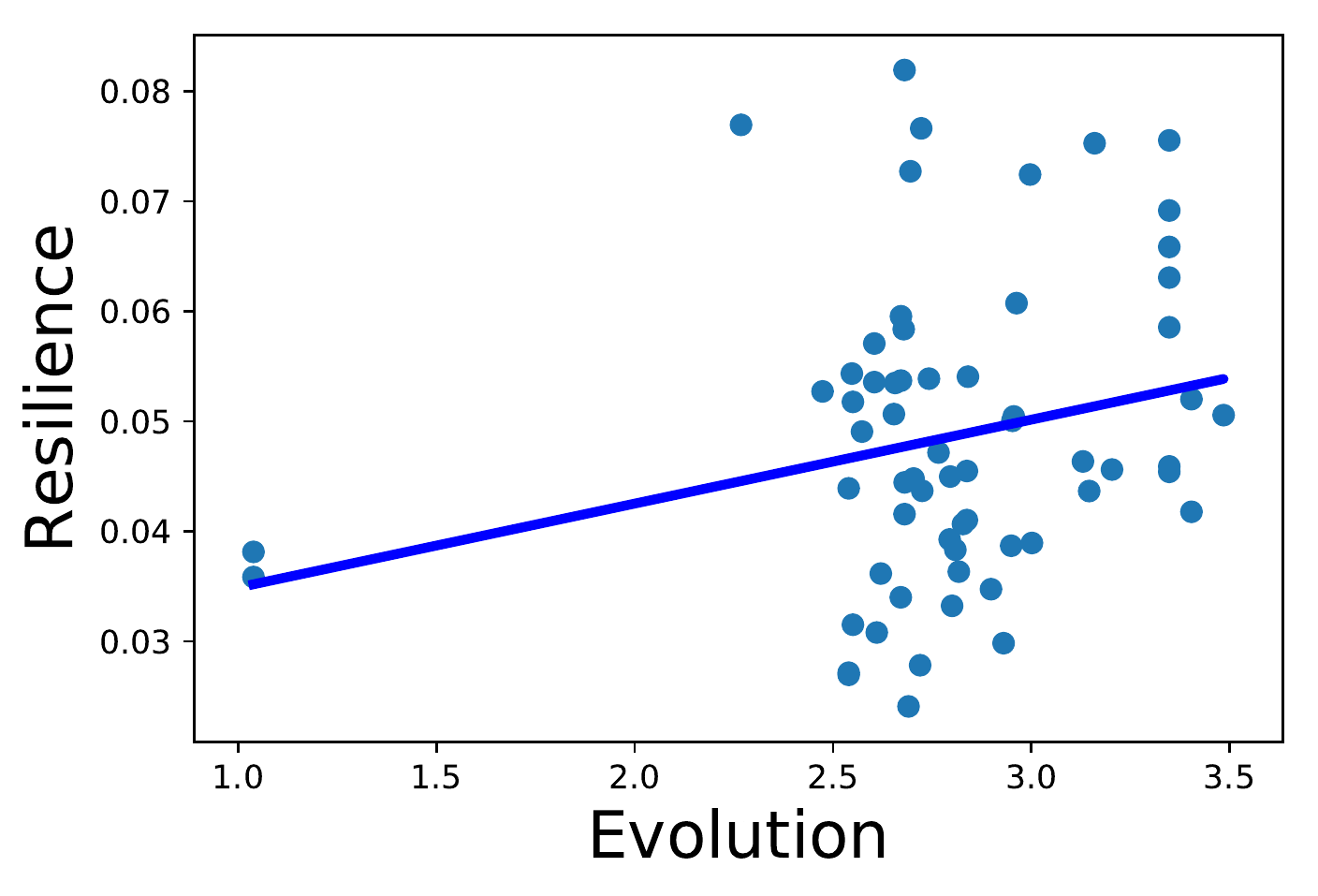}}
     
    \caption{Correlation (coefficient=$0.2366$,p-value=$0.06$) between evolution of species (x) and our resilience (y) measure based on k-cores. Our approach shows a significantly stronger correlation between the measures, which is an evidence that evolution induces PPI networks with a more resilient k-core structure. \label{fig::sourav_resilience}}
\end{figure}

In Figure \ref{fig::sourav_resilience}, we show the correlation between $Resilience_{core}$ (our metric) and evolution. Compared to Figure \ref{fig::zitnik_resilience}, we notice that our resilience measure has a stronger correlation with evolution of the species. In particular, the Pearson's coefficient for the correlation is 0.2366 with a small p-value of 0.06. This is a strong evidence that our notion of k-core resilience is able to capture relevant structural properties of PPI networks. Species that are further (or deeper) in the tree of life present a more robust network. More importantly, this relationship is even stronger when we consider a targeted attack, instead of random, to the k-core structure of the network. 

\section{Conclusion}
In this work we have introduced a novel network robustness measure based on $k$-cores. More specifically, we have addressed the algorithmic problem that aims to maximize the number of nodes falling from their initial cores upon a given budget number of node deletions. We have characterized the hardness of the problem in both traditional and parameterized frameworks. Our problem is NP-hard to approximate by any constant, is $W[2]$-hard parameterized by the budget and is para-NP-hard for several other parameters such as degeneracy and maximum degree of the graph. We have also proposed a few heuristics and demonstrated their performance on several datasets. When applied to PPI networks, our approach has allowed us to correlate network resilience and the evolution of species. In the future, we will apply our resilience metric to the entire PPI database from \cite{Zitnik454033}. Moreover, we want to explore if there exist approximation algorithms for our problem in some relevant constrained cases beyond the ones considered here.

\clearpage
\bibliographystyle{abbrv}
\bibliography{references}


\begin{thebibliography}{32}


\ifx \showCODEN    \undefined \def \showCODEN     #1{\unskip}     \fi
\ifx \showDOI      \undefined \def \showDOI       #1{#1}\fi
\ifx \showISBNx    \undefined \def \showISBNx     #1{\unskip}     \fi
\ifx \showISBNxiii \undefined \def \showISBNxiii  #1{\unskip}     \fi
\ifx \showISSN     \undefined \def \showISSN      #1{\unskip}     \fi
\ifx \showLCCN     \undefined \def \showLCCN      #1{\unskip}     \fi
\ifx \shownote     \undefined \def \shownote      #1{#1}          \fi
\ifx \showarticletitle \undefined \def \showarticletitle #1{#1}   \fi
\ifx \showURL      \undefined \def \showURL       {\relax}        \fi
\providecommand\bibfield[2]{#2}
\providecommand\bibinfo[2]{#2}
\providecommand\natexlab[1]{#1}
\providecommand\showeprint[2][]{arXiv:#2}

\bibitem[\protect\citeauthoryear{Bhawalkar, Kleinberg, Lewi, Roughgarden, and
  Sharma}{Bhawalkar et~al\mbox{.}}{2015}]%
        {bhawalkar2015preventing}
\bibfield{author}{\bibinfo{person}{Kshipra Bhawalkar}, \bibinfo{person}{Jon
  Kleinberg}, \bibinfo{person}{Kevin Lewi}, \bibinfo{person}{Tim Roughgarden},
  {and} \bibinfo{person}{Aneesh Sharma}.} \bibinfo{year}{2015}\natexlab{}.
\newblock \showarticletitle{Preventing unraveling in social networks: the
  anchored k-core problem}.
\newblock \bibinfo{journal}{\emph{SIAM Journal on Discrete Mathematics}}
  \bibinfo{volume}{29}, \bibinfo{number}{3} (\bibinfo{year}{2015}),
  \bibinfo{pages}{1452--1475}.
\newblock


\bibitem[\protect\citeauthoryear{Bonnet, Paschos, and Sikora}{Bonnet
  et~al\mbox{.}}{2016}]%
        {bonnet2016parameterized}
\bibfield{author}{\bibinfo{person}{{\'E}douard Bonnet},
  \bibinfo{person}{Vangelis~Th Paschos}, {and} \bibinfo{person}{Florian
  Sikora}.} \bibinfo{year}{2016}\natexlab{}.
\newblock \showarticletitle{Parameterized exact and approximation algorithms
  for maximum k-set cover and related satisfiability problems}.
\newblock \bibinfo{journal}{\emph{RAIRO-Theoretical Informatics and
  Applications}} \bibinfo{volume}{50}, \bibinfo{number}{3}
  (\bibinfo{year}{2016}), \bibinfo{pages}{227--240}.
\newblock


\bibitem[\protect\citeauthoryear{Chitnis, Fomin, and Golovach}{Chitnis
  et~al\mbox{.}}{2013}]%
        {chitnis2013preventing}
\bibfield{author}{\bibinfo{person}{Rajesh~Hemant Chitnis},
  \bibinfo{person}{Fedor~V Fomin}, {and} \bibinfo{person}{Petr~A Golovach}.}
  \bibinfo{year}{2013}\natexlab{}.
\newblock \showarticletitle{Preventing Unraveling in Social Networks Gets
  Harder}. In \bibinfo{booktitle}{\emph{Twenty-Seventh AAAI Conference on
  Artificial Intelligence}}.
\newblock


\bibitem[\protect\citeauthoryear{Cormen, Leiserson, Rivest, and Stein}{Cormen
  et~al\mbox{.}}{2009}]%
        {DBLP:books/daglib/0023376}
\bibfield{author}{\bibinfo{person}{Thomas~H. Cormen},
  \bibinfo{person}{Charles~E. Leiserson}, \bibinfo{person}{Ronald~L. Rivest},
  {and} \bibinfo{person}{Clifford Stein}.} \bibinfo{year}{2009}\natexlab{}.
\newblock \bibinfo{booktitle}{\emph{Introduction to Algorithms, 3rd Edition}}.
\newblock \bibinfo{publisher}{{MIT} Press}.
\newblock
\showISBNx{978-0-262-03384-8}
\urldef\tempurl%
\url{http://mitpress.mit.edu/books/introduction-algorithms}
\showURL{%
\tempurl}


\bibitem[\protect\citeauthoryear{Crescenzi, D'Angelo, Severini, and
  Velaj}{Crescenzi et~al\mbox{.}}{2015}]%
        {crescenzi2015}
\bibfield{author}{\bibinfo{person}{Pierluigi Crescenzi},
  \bibinfo{person}{Gianlorenzo D'Angelo}, \bibinfo{person}{Lorenzo Severini},
  {and} \bibinfo{person}{Yllka Velaj}.} \bibinfo{year}{2015}\natexlab{}.
\newblock \showarticletitle{Greedily Improving Our Own Centrality in A
  Network}. In \bibinfo{booktitle}{\emph{SEA}}. \bibinfo{publisher}{Springer
  International Publishing}, \bibinfo{pages}{43--55}.
\newblock


\bibitem[\protect\citeauthoryear{Demaine and Zadimoghaddam}{Demaine and
  Zadimoghaddam}{2010}]%
        {demaine2010}
\bibfield{author}{\bibinfo{person}{Erik~D Demaine} {and}
  \bibinfo{person}{Morteza Zadimoghaddam}.} \bibinfo{year}{2010}\natexlab{}.
\newblock \showarticletitle{Minimizing the diameter of a network using shortcut
  edges}. In \bibinfo{booktitle}{\emph{Scandinavian Workshop on Algorithm
  Theory}}. Springer, \bibinfo{pages}{420--431}.
\newblock


\bibitem[\protect\citeauthoryear{Dilkina, Lai, and Gomes}{Dilkina
  et~al\mbox{.}}{2011}]%
        {dilkina2011}
\bibfield{author}{\bibinfo{person}{Bistra Dilkina},
  \bibinfo{person}{Katherine~J. Lai}, {and} \bibinfo{person}{Carla~P. Gomes}.}
  \bibinfo{year}{2011}\natexlab{}.
\newblock \showarticletitle{Upgrading shortest paths in networks}. In
  \bibinfo{booktitle}{\emph{Integration of AI and OR Techniques in Constraint
  Programming for Combinatorial Optimization Problems}}.
  \bibinfo{publisher}{Springer}, \bibinfo{pages}{76--91}.
\newblock


\bibitem[\protect\citeauthoryear{Ellens and Kooij}{Ellens and Kooij}{2013}]%
        {ellens2013graph}
\bibfield{author}{\bibinfo{person}{Wendy Ellens} {and}
  \bibinfo{person}{Robert~E Kooij}.} \bibinfo{year}{2013}\natexlab{}.
\newblock \showarticletitle{Graph measures and network robustness}.
\newblock \bibinfo{journal}{\emph{arXiv preprint arXiv:1311.5064}}
  (\bibinfo{year}{2013}).
\newblock


\bibitem[\protect\citeauthoryear{Fricke, Hedetniemi, and Jacobs}{Fricke
  et~al\mbox{.}}{1998}]%
        {DBLP:journals/arscom/FrickeHJ98}
\bibfield{author}{\bibinfo{person}{Gerd Fricke}, \bibinfo{person}{Stephen~T.
  Hedetniemi}, {and} \bibinfo{person}{David~Pokrass Jacobs}.}
  \bibinfo{year}{1998}\natexlab{}.
\newblock \showarticletitle{Independence and Irredundance in k-Regular Graphs}.
\newblock \bibinfo{journal}{\emph{Ars Comb.}}  \bibinfo{volume}{49}
  (\bibinfo{year}{1998}).
\newblock


\bibitem[\protect\citeauthoryear{Khalil, Dilkina, and Song}{Khalil
  et~al\mbox{.}}{2014}]%
        {Khalil2014}
\bibfield{author}{\bibinfo{person}{Elias~Boutros Khalil},
  \bibinfo{person}{Bistra Dilkina}, {and} \bibinfo{person}{Le Song}.}
  \bibinfo{year}{2014}\natexlab{}.
\newblock \showarticletitle{Scalable diffusion-aware optimization of network
  topology}. In \bibinfo{booktitle}{\emph{SIGKDD international conference on
  Knowledge discovery and data mining}}. ACM, \bibinfo{pages}{1226--1235}.
\newblock


\bibitem[\protect\citeauthoryear{Kimura, Saito, and Motoda}{Kimura
  et~al\mbox{.}}{2008}]%
        {kimura2008minimizing}
\bibfield{author}{\bibinfo{person}{Masahiro Kimura}, \bibinfo{person}{Kazumi
  Saito}, {and} \bibinfo{person}{Hiroshi Motoda}.}
  \bibinfo{year}{2008}\natexlab{}.
\newblock \showarticletitle{Minimizing the Spread of Contamination by Blocking
  Links in a Network.}. In \bibinfo{booktitle}{\emph{AAAI}}.
\newblock


\bibitem[\protect\citeauthoryear{Laishram, Sariy\"{u}ce, Eliassi-Rad, Pinar,
  and Soundarajan}{Laishram et~al\mbox{.}}{2018}]%
        {Laishram2018}
\bibfield{author}{\bibinfo{person}{Ricky Laishram},
  \bibinfo{person}{Ahmet~Erdem Sariy\"{u}ce}, \bibinfo{person}{Tina
  Eliassi-Rad}, \bibinfo{person}{Ali Pinar}, {and} \bibinfo{person}{Sucheta
  Soundarajan}.} \bibinfo{year}{2018}\natexlab{}.
\newblock \showarticletitle{Measuring and Improving the Core Resilience of
  Networks}. In \bibinfo{booktitle}{\emph{Proceedings of the 2018 World Wide
  Web Conference}}. \bibinfo{pages}{609--618}.
\newblock


\bibitem[\protect\citeauthoryear{Liu, Zhou, Wang, and Liu}{Liu
  et~al\mbox{.}}{2017}]%
        {liu2017comparative}
\bibfield{author}{\bibinfo{person}{Jing Liu}, \bibinfo{person}{Mingxing Zhou},
  \bibinfo{person}{Shuai Wang}, {and} \bibinfo{person}{Penghui Liu}.}
  \bibinfo{year}{2017}\natexlab{}.
\newblock \showarticletitle{A comparative study of network robustness
  measures}.
\newblock \bibinfo{journal}{\emph{Frontiers of Computer Science}}
  \bibinfo{volume}{11}, \bibinfo{number}{4} (\bibinfo{year}{2017}),
  \bibinfo{pages}{568--584}.
\newblock


\bibitem[\protect\citeauthoryear{Lordan and Albareda-Sambola}{Lordan and
  Albareda-Sambola}{2019}]%
        {lordan2019exact}
\bibfield{author}{\bibinfo{person}{Oriol Lordan} {and} \bibinfo{person}{Maria
  Albareda-Sambola}.} \bibinfo{year}{2019}\natexlab{}.
\newblock \showarticletitle{Exact calculation of network robustness}.
\newblock \bibinfo{journal}{\emph{Reliability Engineering \& System Safety}}
  \bibinfo{volume}{183} (\bibinfo{year}{2019}), \bibinfo{pages}{276--280}.
\newblock


\bibitem[\protect\citeauthoryear{Luo, Molter, and Suchy}{Luo
  et~al\mbox{.}}{2018}]%
        {luo2018parameterized}
\bibfield{author}{\bibinfo{person}{Junjie Luo}, \bibinfo{person}{Hendrik
  Molter}, {and} \bibinfo{person}{Ondrej Suchy}.}
  \bibinfo{year}{2018}\natexlab{}.
\newblock \showarticletitle{A Parameterized Complexity View on Collapsing
  k-Cores}.
\newblock \bibinfo{journal}{\emph{arXiv preprint arXiv:1805.12453}}
  (\bibinfo{year}{2018}).
\newblock


\bibitem[\protect\citeauthoryear{Medya, Bogdanov, and Singh}{Medya
  et~al\mbox{.}}{2018a}]%
        {medya2018making}
\bibfield{author}{\bibinfo{person}{Sourav Medya}, \bibinfo{person}{Petko
  Bogdanov}, {and} \bibinfo{person}{Ambuj Singh}.}
  \bibinfo{year}{2018}\natexlab{a}.
\newblock \showarticletitle{Making a small world smaller: Path optimization in
  networks}.
\newblock \bibinfo{journal}{\emph{IEEE Transactions on Knowledge and Data
  Engineering}} \bibinfo{volume}{30}, \bibinfo{number}{8}
  (\bibinfo{year}{2018}), \bibinfo{pages}{1533--1546}.
\newblock


\bibitem[\protect\citeauthoryear{Medya, Ma, Silva, and Singh}{Medya
  et~al\mbox{.}}{2020a}]%
        {medya2019k}
\bibfield{author}{\bibinfo{person}{Sourav Medya}, \bibinfo{person}{Tiyani Ma},
  \bibinfo{person}{Arlei Silva}, {and} \bibinfo{person}{Ambuj Singh}.}
  \bibinfo{year}{2020}\natexlab{a}.
\newblock \showarticletitle{A game theoretic approach for core resilience}.
\newblock \bibinfo{journal}{\emph{Proceedings of the 29th International Joint
  Conference on Artificial Intelligence}} (\bibinfo{year}{2020}).
\newblock


\bibitem[\protect\citeauthoryear{Medya, Silva, and Singh}{Medya
  et~al\mbox{.}}{2020b}]%
        {medya2020approximate}
\bibfield{author}{\bibinfo{person}{Sourav Medya}, \bibinfo{person}{Arlei
  Silva}, {and} \bibinfo{person}{Ambuj Singh}.}
  \bibinfo{year}{2020}\natexlab{b}.
\newblock \showarticletitle{Approximate Algorithms for Data-driven Influence
  Limitation}.
\newblock \bibinfo{journal}{\emph{IEEE Transactions on Knowledge and Data
  Engineering}} (\bibinfo{year}{2020}).
\newblock


\bibitem[\protect\citeauthoryear{Medya, Silva, Singh, Basu, and Swami}{Medya
  et~al\mbox{.}}{2018b}]%
        {medya2018group}
\bibfield{author}{\bibinfo{person}{Sourav Medya}, \bibinfo{person}{Arlei
  Silva}, \bibinfo{person}{Ambuj Singh}, \bibinfo{person}{Prithwish Basu},
  {and} \bibinfo{person}{Ananthram Swami}.} \bibinfo{year}{2018}\natexlab{b}.
\newblock \showarticletitle{Group centrality maximization via network design}.
  In \bibinfo{booktitle}{\emph{SIAM International Conference on Data Mining}}.
  \bibinfo{pages}{126--134}.
\newblock


\bibitem[\protect\citeauthoryear{Medya, Vachery, Ranu, and Singh}{Medya
  et~al\mbox{.}}{2018c}]%
        {medya2018noticeable}
\bibfield{author}{\bibinfo{person}{Sourav Medya}, \bibinfo{person}{Jithin
  Vachery}, \bibinfo{person}{Sayan Ranu}, {and} \bibinfo{person}{Ambuj Singh}.}
  \bibinfo{year}{2018}\natexlab{c}.
\newblock \showarticletitle{Noticeable network delay minimization via node
  upgrades}.
\newblock \bibinfo{journal}{\emph{Proceedings of the VLDB Endowment}}
  \bibinfo{volume}{11}, \bibinfo{number}{9} (\bibinfo{year}{2018}),
  \bibinfo{pages}{988--1001}.
\newblock


\bibitem[\protect\citeauthoryear{Meyerson and Tagiku}{Meyerson and
  Tagiku}{2009}]%
        {meyerson2009}
\bibfield{author}{\bibinfo{person}{A. Meyerson} {and} \bibinfo{person}{B
  Tagiku}.} \bibinfo{year}{2009}\natexlab{}.
\newblock \showarticletitle{Minimizing average shortest path distances via
  shortcut edge addition}. In \bibinfo{booktitle}{\emph{APPROX-RANDOM, I.
  Dinur, K.Janson, J.Noar and J. D. P. Rolim Eds, Vol. 5687. Springer}}.
  \bibinfo{pages}{272--285}.
\newblock


\bibitem[\protect\citeauthoryear{Rossi and Ahmed}{Rossi and Ahmed}{2015}]%
        {nr-aaai15}
\bibfield{author}{\bibinfo{person}{Ryan~A. Rossi} {and}
  \bibinfo{person}{Nesreen~K. Ahmed}.} \bibinfo{year}{2015}\natexlab{}.
\newblock \showarticletitle{The Network Data Repository with Interactive Graph
  Analytics and Visualization}. In \bibinfo{booktitle}{\emph{Proceedings of the
  Twenty-Ninth AAAI Conference on Artificial Intelligence}}.
\newblock
\urldef\tempurl%
\url{http://networkrepository.com}
\showURL{%
\tempurl}


\bibitem[\protect\citeauthoryear{Safari-Alighiarloo, Taghizadeh,
  Rezaei-Tavirani, Goliaei, and Peyvandi}{Safari-Alighiarloo
  et~al\mbox{.}}{2014}]%
        {safari2014protein}
\bibfield{author}{\bibinfo{person}{Nahid Safari-Alighiarloo},
  \bibinfo{person}{Mohammad Taghizadeh}, \bibinfo{person}{Mostafa
  Rezaei-Tavirani}, \bibinfo{person}{Bahram Goliaei}, {and}
  \bibinfo{person}{Ali~Asghar Peyvandi}.} \bibinfo{year}{2014}\natexlab{}.
\newblock \showarticletitle{Protein-protein interaction networks (PPI) and
  complex diseases}.
\newblock \bibinfo{journal}{\emph{Gastroenterology and Hepatology from bed to
  bench}} \bibinfo{volume}{7}, \bibinfo{number}{1} (\bibinfo{year}{2014}),
  \bibinfo{pages}{17}.
\newblock


\bibitem[\protect\citeauthoryear{Seidman}{Seidman}{1983}]%
        {seidman1983network}
\bibfield{author}{\bibinfo{person}{Stephen~B Seidman}.}
  \bibinfo{year}{1983}\natexlab{}.
\newblock \showarticletitle{Network structure and minimum degree}.
\newblock \bibinfo{journal}{\emph{Social networks}} \bibinfo{volume}{5},
  \bibinfo{number}{3} (\bibinfo{year}{1983}), \bibinfo{pages}{269--287}.
\newblock


\bibitem[\protect\citeauthoryear{Tang, Liu, and Zhou}{Tang
  et~al\mbox{.}}{2015}]%
        {tang2015enhancing}
\bibfield{author}{\bibinfo{person}{Xianglong Tang}, \bibinfo{person}{Jing Liu},
  {and} \bibinfo{person}{Mingxing Zhou}.} \bibinfo{year}{2015}\natexlab{}.
\newblock \showarticletitle{Enhancing network robustness against targeted and
  random attacks using a memetic algorithm}.
\newblock \bibinfo{journal}{\emph{EPL (Europhysics Letters)}}
  \bibinfo{volume}{111}, \bibinfo{number}{3} (\bibinfo{year}{2015}),
  \bibinfo{pages}{38005}.
\newblock


\bibitem[\protect\citeauthoryear{Tong, Prakash, Eliassi-Rad, Faloutsos, and
  Faloutsos}{Tong et~al\mbox{.}}{2012}]%
        {Tong2012GML}
\bibfield{author}{\bibinfo{person}{Hanghang Tong}, \bibinfo{person}{B~Aditya
  Prakash}, \bibinfo{person}{Tina Eliassi-Rad}, \bibinfo{person}{Michalis
  Faloutsos}, {and} \bibinfo{person}{Christos Faloutsos}.}
  \bibinfo{year}{2012}\natexlab{}.
\newblock \showarticletitle{Gelling, and melting, large graphs by edge
  manipulation}. In \bibinfo{booktitle}{\emph{Proceedings of the 21st ACM
  international conference on Information and knowledge management}}. ACM,
  \bibinfo{pages}{245--254}.
\newblock


\bibitem[\protect\citeauthoryear{Vazirani}{Vazirani}{2013}]%
        {vazirani2013approximation}
\bibfield{author}{\bibinfo{person}{Vijay~V Vazirani}.}
  \bibinfo{year}{2013}\natexlab{}.
\newblock \bibinfo{booktitle}{\emph{Approximation algorithms}}.
\newblock \bibinfo{publisher}{Springer Science \& Business Media}.
\newblock


\bibitem[\protect\citeauthoryear{Zhang, Zhang, Qin, Zhang, and Lin}{Zhang
  et~al\mbox{.}}{2017}]%
        {zhang2017finding}
\bibfield{author}{\bibinfo{person}{Fan Zhang}, \bibinfo{person}{Ying Zhang},
  \bibinfo{person}{Lu Qin}, \bibinfo{person}{Wenjie Zhang}, {and}
  \bibinfo{person}{Xuemin Lin}.} \bibinfo{year}{2017}\natexlab{}.
\newblock \showarticletitle{Finding critical users for social network
  engagement: The collapsed k-core problem}. In
  \bibinfo{booktitle}{\emph{Thirty-First AAAI Conference on Artificial
  Intelligence}}.
\newblock


\bibitem[\protect\citeauthoryear{Zhou and Liu}{Zhou and Liu}{2016}]%
        {zhou2016two}
\bibfield{author}{\bibinfo{person}{Mingxing Zhou} {and} \bibinfo{person}{Jing
  Liu}.} \bibinfo{year}{2016}\natexlab{}.
\newblock \showarticletitle{A two-phase multiobjective evolutionary algorithm
  for enhancing the robustness of scale-free networks against multiple
  malicious attacks}.
\newblock \bibinfo{journal}{\emph{IEEE transactions on cybernetics}}
  \bibinfo{volume}{47}, \bibinfo{number}{2} (\bibinfo{year}{2016}),
  \bibinfo{pages}{539--552}.
\newblock


\bibitem[\protect\citeauthoryear{Zhou, Zhang, Lin, Zhang, and Chen}{Zhou
  et~al\mbox{.}}{2019}]%
        {zhou2019k}
\bibfield{author}{\bibinfo{person}{Zhongxin Zhou}, \bibinfo{person}{Fan Zhang},
  \bibinfo{person}{Xuemin Lin}, \bibinfo{person}{Wenjie Zhang}, {and}
  \bibinfo{person}{Chen Chen}.} \bibinfo{year}{2019}\natexlab{}.
\newblock \showarticletitle{K-Core Maximization: An Edge Addition Approach}. In
  \bibinfo{booktitle}{\emph{Proceedings of the 28th International Joint
  Conference on Artificial Intelligence}}. AAAI Press,
  \bibinfo{pages}{4867--4873}.
\newblock


\bibitem[\protect\citeauthoryear{Zhu, Chen, Wang, and Lin}{Zhu
  et~al\mbox{.}}{2018}]%
        {zhu2018k}
\bibfield{author}{\bibinfo{person}{Weijie Zhu}, \bibinfo{person}{Chen Chen},
  \bibinfo{person}{Xiaoyang Wang}, {and} \bibinfo{person}{Xuemin Lin}.}
  \bibinfo{year}{2018}\natexlab{}.
\newblock \showarticletitle{K-core Minimization: An Edge Manipulation
  Approach}. In \bibinfo{booktitle}{\emph{Proceedings of the 27th ACM
  International Conference on Information and Knowledge Management}}. ACM,
  \bibinfo{pages}{1667--1670}.
\newblock


\bibitem[\protect\citeauthoryear{Zitnik, Sosic, Feldman, and Leskovec}{Zitnik
  et~al\mbox{.}}{2019}]%
        {Zitnik454033}
\bibfield{author}{\bibinfo{person}{Marinka Zitnik}, \bibinfo{person}{Rok
  Sosic}, \bibinfo{person}{Marcus~W Feldman}, {and} \bibinfo{person}{Jure
  Leskovec}.} \bibinfo{year}{2019}\natexlab{}.
\newblock \showarticletitle{Evolution of resilience in protein interactomes
  across the tree of life}.
\newblock \bibinfo{journal}{\emph{bioRxiv}} (\bibinfo{year}{2019}).
\newblock


\end{thebibliography}
\end{document}